\newtheorem{theorem}{Theorem}
\newtheorem{lemma}{Lemma}
\newtheorem{remark}{Remark}
\def\url@leostyle{%
	\@ifundefined{selectfont}{\def\UrlFont{\sf}}{\def\UrlFont{\small\ttfamily}}}
\begin{document}

\title{Fast Successive-Cancellation Decoding of Polar Codes with Sequence Nodes}

\author{\IEEEauthorblockN{Yang Lu,~\IEEEmembership{Student Member,~IEEE}, Ming-Min Zhao,~\IEEEmembership{Member,~IEEE}, \\ Ming Lei,~\IEEEmembership{Member,~IEEE}, and Min-Jian Zhao,~\IEEEmembership{Senior Member,~IEEE}}
\thanks{The authors are with the College of Information Science and Electronic Engineering, Zhejiang University, Hangzhou 310027, China (email: \{22031097, zmmblack, lm1029, mjzhao\}@zju.edu.cn).}}

\maketitle
\vspace{-3em}
\begin{abstract} \vspace{-1em}
Due to the sequential nature of the successive-cancellation (SC) algorithm, the decoding of polar codes suffers from significant decoding latencies. Fast SC decoding is able to speed up the SC decoding process, by implementing parallel decoders at the intermediate levels of the SC decoding tree for some special nodes with specific information and frozen bit patterns. To further improve the parallelism of SC decoding, this paper present a new class of \textcolor{black}{special nodes} composed of a sequence of rate one or single-parity-check (SR1/SPC) nodes, which can be easily found especially in high-rate polar code and is able to envelop a wide variety of existing special node types. Then, we analyse the parity constraints caused by the frozen bits in each descendant node, such that the decoding performance of the SR1/SPC node can be preserved once the parity constraints are satisfied. Finally, a generalized fast decoding algorithm is proposed to decode SR1/SPC nodes efficiently, where the corresponding parity constraints are taken into consideration. Simulation results show that the proposed decoding algorithm of the SR1/SPC node can achieve near-ML performance, and the overall decoding latency can be reduced by 43.8\% as compared to the state-of-the-art fast SC decoder.
\end{abstract}
\vspace{-1em}
\begin{IEEEkeywords}\vspace{-1em}
Polar codes, SC, fast SC decoding, ML decoding, parity constraints. 
\end{IEEEkeywords}

\section{Introduction}
\label{section:1}
\IEEEPARstart{P}{olar} codes are a class of error correction codes which are theoretically proved to have capacity-achieving capabilities under binary-input memoryless channels \cite{Arikan2009Channel}. Due to the capacity-achieving error-correction performance and low-complexity successive cancellation (SC) based algorithm, polar codes are adopted in the control channel of the enhanced mobile broadband (eMBB) use case in the latest 5G cellular standard \cite{5Gstandard}. However, there are two main drawbacks associated with the SC decoding algorithm. First, according to the polarization theory \cite{Arikan2009Channel}, polar codes under SC decoding can achieve the channel capacity only when the code length tends toward infinity. As a result, SC decoding falls short in providing a reasonable error-correction performance for practical moderate code lengths. Second, the sequential bit-by-bit nature of SC decoding leads to high decoding latency and low throughput in terms of hardware implementation, which hinders its application in low-latency communication scenarios such as the ultra-reliable low-latency communication (URLLC) \cite{3GPP} use case of 5G. 

The first drawback mentioned above is mainly due to the fact that SC decoding is suboptimal with respect to maximum-likelihood (ML) decoding. {\color{black}To partially compensate for this sub-optimality, SC list (SCL) decoding algorithm and its LLR-based SCL decoder were presented in \cite{Tal2015List} and \cite{Balatsoukas2015LLR}, respectively. By maintaining a list of candidate codewords, the SCL decoder is able to reduce the performance gap between SC and ML decoding at the cost of increased implementation complexity \cite{Ercan2017error}.} By concatenating polar codes with simple cyclic redundancy check (CRC), it was observed that the performance of the SCL decoder can be significantly improved \cite{Niu2012CRC}. With a large list size, the decoding performance provided by the CRC-aided SCL (CA-SCL) decoder can approach that of the ML decoder, which makes polar codes competitive with the other state-of-the-art channel codes, such as low-density parity-check (LDPC) and turbo codes \cite{Balatsoukas2017Comparison}.

The second drawback stems from the sequential nature of SC decoding. To tackle this issue, many parallel decoding schemes have been developed for polar codes \cite{Sarkis2013Increasing,Alamdar2011Simplified,Sarkis2014Fast,Hanif2017Fast,Condo2018Generalized,Sarkis2016Fast,Hashemi2016Fast,Hashemi2017Fast,Ardakani2019Fast,Zheng2021Threshold}. The main idea behind these schemes is to decode at the intermediate levels of the SC or SCL decoding tree, instead of the leaf nodes. In particular, it was shown in \cite{Sarkis2013Increasing} that ML decoding of the intermediate nodes can be employed to parallelize the SC decoding. However, this scheme is only suitable for nodes with short lengths. \textcolor{black}{On the contrary, the work \cite{Alamdar2011Simplified} proposed a low-complexity simplified SC (SSC) decoding algorithm for two certain frozen bit patterns, i.e., rate-zero (Rate-0) and rate-one (Rate-1) nodes that contain no information and frozen bit, respectively. In particular, the ML codeword of a Rate-0 node is always an all-zero vector, while the ML codeword of a Rate-1 node is the hard-decision output of the log-likelihood ratio (LLR) vector at the node root.} Likewise, single-parity-check (SPC) and repetition (REP) nodes along with their fast decoding techniques were proposed in \cite{Sarkis2014Fast}, which is known as the fast SSC (FSSC) decoding algorithm. Furthermore, the work \cite{Hanif2017Fast} advanced the studies in \cite{Alamdar2011Simplified} and \cite{Sarkis2014Fast} by proposing fast decoders for five new types of special nodes, namely the Type I-V nodes, which further reduces the SC decoding latency. In \cite{Condo2018Generalized}, a generalized REP (G-REP) node and a generalized parity-check (G-PC) node were proposed to reduce the latency of SC decoding even further. Moreover, the identification and utilization of the aforementioned special nodes were also extended to SCL decoding \cite{Sarkis2016Fast,Hashemi2016Fast,Hashemi2017Fast,Ardakani2019Fast}. However, all these works require the design of a separate decoder for each class of nodes, which inevitably increases the implementation complexity. In \cite{Zheng2021Threshold}, a class of sequence Rate-0/REP (SR0/REP) nodes was proposed which provides a unified description of most of the existing special nodes. With the introduction of SR0/REP nodes, a generalized fast SC decoding algorithm was proposed to achieve a higher degree of parallelism without degrading the error-correction performance.

In this paper, a new fast SC decoding algorithm is proposed to further reduce the decoding latency of polar codes. First, a new class of multi-node information and frozen bit patterns is introduced, which is composed of a sequence of Rate-1 or SPC (SR1/SPC) nodes, and thus provides a unified description of a wide variety of existing special nodes. The proposed SR1/SPC node is typically found at higher levels of the decoding tree, thus a higher degree of parallelism can be exploited as compared to the existing special nodes. Then, we investigate the impact of the frozen bits contained in the proposed SR1/SPC node, which leads to two types of parity constraints that are imposed on the decoded codeword. Furthermore, a simple and efficient decoding algorithm is proposed to decode the proposed SR1/SPC node, which can achieve near-ML performance with higher degree of parallelism. By combining the proposed SR1/SPC nodes with other types of special nodes, such as the SR0/REP nodes \cite{Zheng2021Threshold}, we show that the overall decoding latency can be reduced by 43.8\% and 62.9\% as compared to the state-of-the-art SC decoder and the renowned FSSC decoder \cite{Sarkis2014Fast}, respectively. 

The remainder of this paper is organized as follows. Section~\ref{section:2} reviews the backgrounds on polar codes, SC decoding, and fast SC decoding techniques. In Section~\ref{section:3}, we introduce the SR1/SPC node, and then analyse the induced parity constraints. Fast SC decoding of the proposed SR1/SPC node is presented in Section~\ref{section:4}. Section~\ref{section:5} provides simulation results to evaluate the decoding latency, complexity and performance. Finally, conclusions are drawn in Section~\ref{section:6}.

\emph{Notations}: Scalars, vectors, and matrices are respectively denoted by lower case, boldface lower case, and boldface upper case letters. For an arbitrary vector $\bm{a}$, {\color{black}$a[i:k:j]$ represents a subvector of $\bm{a}$ which is constructed by $(a[i], a[i+k], \ldots, a[i+m k])$}, where $k$ is the step size and $m = \lfloor (j-i)/k \rfloor$. If $k=1$, $a[i:k:j]$ is simply written as $a[i:j]$. $\operatorname{sgn}(a)$ denotes the sign of a scalar $a$ and $\operatorname{min}(\bm{a})$ returns the minimum element in a vector $\bm{a}$. Besides, $\bm{I}_N$ and $\bm{0}_N$ denote the $N \times N$ identity matrix and the $N \times N$ all-zero matrix, respectively. For a matrix $\bm{A}$, $(\bm{A})_{i}$ represents its $i$-th column vector. In addition, \textcolor{black}{$\{\cdot\}$ denotes a set}, $\lceil \cdot \rceil$ and $\lfloor \cdot \rfloor$ denote the round-up and round-down operations, respectively. $\oplus$ denotes the bitwise XOR operation and $\otimes$ denotes the Kronecker product. $\bm{A}^{\otimes n}$ denotes the $n$-th Kronecker power of $\bm{A}$.

\vspace{-0.5em}
\section{Preliminaries} \label{section:2}
\vspace{-0.5em}
\subsection{Polar Codes}
A polar code with code length $N=2^{n}$ and information length $K$ can be represented as $\mathcal{P}(N, K)$, which has a code rate of $R=K/N$. The transmitted polar codeword $\bm{x} = (x[1], x[2], \ldots, x[N])$ is obtained by $\bm{x} = \bm{u} \bm{G}_N$, where $\bm{u} = (u[1], u[2], \ldots, u[N])$ is the message vector and $\bm{G}_N = \bm{F}^{\otimes n}$ is the generator matrix with $\bm{F} = \begin{bmatrix}\begin{smallmatrix}	1 & 0 \\ 1 & 1 \end{smallmatrix}\end{bmatrix}$ being the base polarizing matrix. The message vector $\bm{u}$ is constructed by choosing $K$ bit-channels with high reliability to transmit information bits, while the remaining $N-K$ bits are frozen to some fixed values (usually set to 0). With the help of an indicator vector $\bm{c} = (c[1], c[2], \ldots, c[N])$, we are able to distinguish information and frozen bit-channels according to
\begin{equation}
	\begin{aligned}
		c[k] = \begin{cases} 1, & \textrm{if}\ k \in \mathcal{A} \\ 0, & \textrm{if}\ k \in \mathcal{A}^c\end{cases},
	\end{aligned}
	\label{eqn:1}
\end{equation}
where $\mathcal{A}$ and $\mathcal{A}^c$ denote the sets of information and frozen bits indices, respectively, which are both known to the encoder and decoder. The codeword vector $\bm{x}$ is then modulated and transmitted over the channel. Throughout this paper, we consider binary phase shift keying (BPSK) modulation and additive white Gaussian noise (AWGN) channel. 

\vspace{-0.5em}
\subsection{SC Decoding}
SC decoding originated in \cite{Arikan2009Channel} is a greedy search algorithm for decoding polar codes, and only the best decoding result is reserved in each decoding step. The decoding procedure of SC can be interpreted as a binary tree search process that starts from the root node to the leaf node and from the left branch to the right. At the $p$-th ($0 \leq p \leq n$) level of the decoding tree, each parent node, referred as $\mathcal{N}^{i}_{p}$, has a left child node $\mathcal{N}^{2i-1}_{p-1}$ and a right child node $\mathcal{N}^{2i}_{p-1}$, where $1 \leq i \leq 2^{n-p}$. There are two types of messages, i.e., the soft LLRs $\alpha^{i}_{p}[1:2^p]$ that are propagated from the parent node to their child nodes, and the hard codeword $\beta^{i}_{p}[1:2^p]$ that is propagated from the child nodes to their parent node in return. The $2^{p-1}$ LLRs of the left child node $\alpha^{2i-1}_{p-1}[1:2^{p-1}]$ and those of the right child node $\alpha^{2i}_{p-1}[1:2^{p-1}]$ can be respectively obtained by
\begin{equation}\color{black}
\begin{aligned}
\alpha^{2i-1}_{p-1}[k] =& \operatorname{sgn}(\alpha^{i}_{p}[k]) \operatorname{sgn}(\alpha^{i}_{p}[k+2^{p-1}])  \operatorname{min}(|\alpha^{i}_{p}[k]|, |\alpha^{i}_{p}[k+2^{p-1}]|), 
\end{aligned}
\label{eqn:2}
\end{equation}
\vspace{-2em}
\begin{equation}
\begin{aligned}
\alpha^{2i}_{p-1}[k] = \alpha^{i}_{p}[k+2^{p-1}] + (1-2\beta^{2i-1}_{p-1}[k]) \alpha^{i}_{p}[k].
\end{aligned}
\label{eqn:3}
\end{equation}
Besides, the hard codeword of $\mathcal{N}^{i}_{p}$, i.e., $\beta^{i}_{p}[1:2^p]$, is calculated based on the hard codewords of $\mathcal{N}^{2i-1}_{p-1}$ and $\mathcal{N}^{2i}_{p-1}$ as follows:
\begin{equation}
\begin{aligned}
\beta^{i}_{p}[k] = \begin{cases}\beta^{2i-1}_{p-1}[k] \oplus \beta^{2i}_{p-1}[k], & \textrm{if}\ 1 \leq k \leq 2^{p-1} \\
\beta^{2i}_{p-1}[k], & \textrm{otherwise}\end{cases},
\end{aligned}
\label{eqn:4}
\end{equation}
at the leaf level $p=0$, bit $u[k]$ can be estimated by performing hard decision on $\alpha^{k}_{0}[1]$ according to
\begin{equation}
\begin{aligned}
\hat{u}[k] = \beta^{k}_{0}[1] = \operatorname{HD}(\alpha^{k}_{0}[1]) =  \begin{cases}\frac{1-\textrm{sgn}(\alpha^{k}_{0}[1])}{2}, & \textrm{if}\ k \in \mathcal{A} \\ 0, & \textrm{if}\ k \in \mathcal{A}^c\end{cases},
\end{aligned}
\label{eqn:5}
\end{equation}
where $\hat{u}[k]$ is the estimate of $u[k]$ and $\operatorname{HD}(\cdot)$ is the hard decision function.

\vspace{-0.5em}
\subsection{Fast SC Decoding}
The sequential nature of SC decoding, i.e., each bit estimate depends on all previous ones, results in high decoding latency. \textcolor{black}{The decoding speed can be accelerated by implementing decoders that can output multiple bits in parallel without traversing the bottom of decoding tree.} Based on this idea, the work \cite{Sarkis2013Increasing} presented a highly parallel ML decoder to estimate the codeword of node $\mathcal{N}^{i}_{p}$, i.e., 
\begin{equation}
	\begin{aligned}
		\beta^{i}_{p}[1:2^p] = \mathop{\arg\max}\limits_{\beta^{i}_{p}[1:2^p] \in \mathbb{B}^{i}_{p}} \sum_{k=1}^{2^p} (-1)^{\beta^{i}_{p}[k]} \alpha^{i}_{p}[k],
	\end{aligned}
	\label{eqn:6}
\end{equation}
where $\mathbb{B}^{i}_{p}$ is the set of all codewords associated with node $\mathcal{N}^{i}_{p}$. However, the complexity of solving \eqref{eqn:6} can be very high, especially for nodes with long lengths. Therefore, the work \cite{Sarkis2014Fast} proposed the FSSC decoding algorithm, based on the discovery that the calculation of \eqref{eqn:6} can be significantly simplified for some nodes with special information and frozen bit patterns. In particular, four types of special nodes, i.e., Rate-0, Rate-1, REP and SPC, are considered in the FSSC decoder, and their structures are described as follows: 
\begin{itemize}
	\item Rate-0: all bits are frozen bits, $\bm{c} = \{0, 0, \ldots, 0\}$.
	\item Rate-1: all bits are information bits, $\bm{c} = \{1, 1, \ldots, 1\}$.
	\item REP: all bits are frozen bits except the rightmost one, $\bm{c} = \{0, \ldots, 0, 1\}$.
	\item SPC: all bits are information bits except the leftmost one, $\bm{c} = \{0, 1, \ldots, 1\}$.
\end{itemize}

Moreover, the work \cite{Hanif2017Fast} further improved the FSSC decoding speed by investigating five additional special nodes along with their efficient SC decoders. The five additional special nodes are summarized as follows:
\begin{itemize}
	\item Type I: all bits are frozen bits except the rightmost two, $\bm{c} = \{0, \ldots, 0, 1, 1\}$.
	\item Type II: all bits are frozen bits except the rightmost three, $\bm{c} = \{0, \ldots, 0, 1, 1, 1\}$.
	\item Type III: all bits are information bits except the leftmost two, $\bm{c} = \{0, 0, 1, \ldots, 1\}$.
	\item Type IV: all bits are information bits except the leftmost three, $\bm{c} = \{0, 0, 0, 1, \ldots, 1\}$.
	\item Type V: all bits are frozen bits except the rightmost three and the fifth-to-last, $\bm{c} = \{0, \ldots, $ $0, 1, 0, 1, 1, 1\}$.
\end{itemize}

In \cite{Condo2018Generalized}, two types of generalized fast decoding techniques for G-REP and G-PC nodes were introduced. G-REP is a node at level $p$ with all its descendants being Rate-0 nodes, except the rightmost one at a certain level $q < p$, which is a generic node of rate $C$ ($0 \leq C < 1$). Similarly, G-PC is a node at level $p$ having all its descendants as Rate-1 nodes except the leftmost one at a certain level $q < p$, which is a Rate-0 node. 

Recently, the work \cite{Zheng2021Threshold} proposed a new class of multi-node information and frozen bit patterns, namely SR0/REP node, which includes most of the existing special nodes as special cases. For an SR0/REP node at level $p$, all its descendants are Rate-0 or REP nodes except the rightmost one at level $q$, which is a generic source node. The general structure of the SR0/REP node is depicted in Fig.~\ref{fig:1}, where $\mathcal{N}^{R_q}_{q}$ denotes the source node at level $q$ with $R_q = 2^{p-q}i$. Note that an SR0/REP node will reduce to a G-REP node if all its descendants except the source node are Rate-0 nodes.

\begin{figure}[t]
	\centering
		\setlength{\abovecaptionskip}{-0.1cm} 
			\setlength{\belowcaptionskip}{-0.1cm} 
	\includegraphics[width=0.45\textwidth]{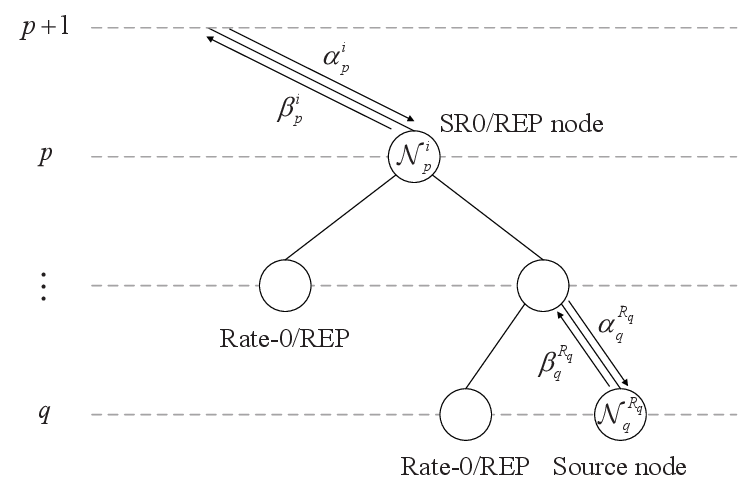}
	\caption{General structure of the SR0/REP node.}
	\label{fig:1}
	\vspace{-0.5cm}
\end{figure}

An SR0/REP node can be decoded based on the codeword of its source node. Let $\bm{s}_l = \{s_l[1], \ldots, s_l[2^{p-q}]\}$ denote the repetition sequence given by
\begin{equation}
	\begin{aligned}
		\bm{s}_l = (\eta_{q},1) \otimes (\eta_{q+1},1) \otimes \cdots \otimes (\eta_{p-1},1),\qquad l \in \{1,2,\ldots,|\mathbb{S}|\},
	\end{aligned}
	\label{eqn:7}
\end{equation}
where $\mathbb{S}$ denotes the set containing all possible $\bm{s}_l$, and $\eta_{r}$ denotes the rightmost bit value of the descendant Rate-0/REP node $\mathcal{N}^{R_r-1}_{r}$ at level $r$, i.e., 
\begin{equation}
	\begin{aligned}
		\eta_{k} = \begin{cases}
			1, & \textrm{if}\ \mathcal{N}^{R_r-1}_{r}\ \textrm{is a Rate-0 node} \\
			1-2\beta^{R_r-1}_{r}[2^r], & \textrm{if $\mathcal{N}^{R_r-1}_{r}$ is an REP node}
		\end{cases}.
	\end{aligned}
	\label{eqn:8}
\end{equation}
Note that $\bm{s}_l$ can be pre-determined based on the SR0/REP node structure before decoding. Let $\alpha^{R_q}_{q_l}[1:2^p]$ represent the LLR vector of the source node associated with $\bm{s}_l$, then in order to decode an SR0/REP node, the LLRs of the source node are first calculated as
\begin{equation}
	\begin{aligned}
		\alpha^{R_q}_{q_l}[k] = \sum\limits_{m=1}^{2^{p-q}} \alpha^{i}_{p}[(m-1)2^q+k] s_l[m], \qquad k \in \{1,2,\ldots,2^q\},
	\end{aligned}
	\label{eqn:9}
\end{equation}
Then, the optimal decoding path index $\hat{l}$ can be selected according to
\begin{equation}
	\begin{aligned}
		\hat{l} = \mathop{\arg\max}\limits_{l \in \{1,\ldots,|\mathbb{S}|\}}{\sum\limits_{k=1}^{2^q}|\alpha^{R_q}_{q_l}[k]|}.
	\end{aligned}
	\label{eqn:10}
\end{equation}
Subsequently, the source node is decoded to obtain $\beta^{R_q}_{q}[k]$ using $\alpha^{R_q}_{q_{\hat{l}}}[1:2^p]$. In particular, fast decoding techniques can be utilized if the source node has a special structure, otherwise the plain SC decoding is used. Finally, the decoding result of the SR0/REP node is obtained by
\begin{equation}
	\begin{aligned}
		\beta^{i}_{p}[k:2^q:2^p] = \beta^{R_q}_{q}[k] \oplus \bm{s}_{\hat{l}}, \qquad k \in \{1,2,\ldots,2^q\}.
	\end{aligned}
	\label{eqn:11}
\end{equation}

Compared with the plain SC decoding, all of the aforementioned fast SC decoding algorithms can preserve the decoding performance with reduced latency by exploiting the special structure of the nodes at intermediate levels of the decoding tree. As a result, if the level of the special node to be decoded is higher, the decoding speed would be faster, but the decoding algorithm might be more difficult to be conducted since the associated bit pattern is more complicated. 

\vspace{-0.5em}
\section{Sequence Rate-1 or SPC nodes}
\label{section:3}
In this section, we introduce a new special node, known as the SR1/SPC node, which is characterized by a source node combined with a sequence of Rate-1 or SPC nodes. Then, we characterize the parity constraints induced by the proposed special node that should be satisfied in order to facilitate fast decoding without performance degradation. 

\vspace{-0.5em}
\subsection{Node Structure}
Similar to the SR0/REP node \cite{Zheng2021Threshold}, the proposed SR1/SPC node is defined as any node at level $p$ whose \textcolor{black}{right descendants at level $q \leq r < p$} are all Rate-1 or SPC nodes, except the leftmost one at a certain level $q$, \textcolor{black}{where $0 \leq q < p$}. The leftmost node is referred as the source node, which is a generic node of Rate-C. For illustration purpose, the structure of the SR1/SPC node is depicted in Fig.~\ref{fig:2}.

\begin{figure}[!t]
	\centering
	\setlength{\abovecaptionskip}{-0.1cm} 
	\setlength{\belowcaptionskip}{-0.1cm} 
	\includegraphics[width=0.45\textwidth]{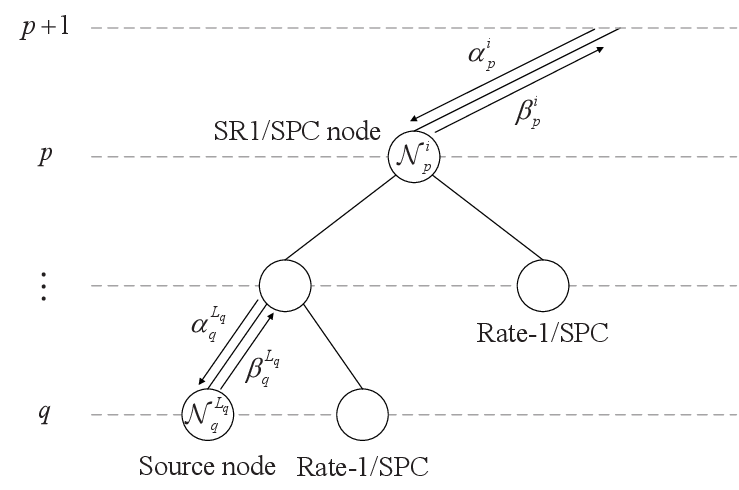}
	\caption{General structure of an SR1/SPC node.}
	\label{fig:2}
	\vspace{-0.5cm}
\end{figure}

Given an SR1/SPC node, \textcolor{black}{its node structure, denoted by $\operatorname{NS}(p,q,\mathcal{L})$}, is characterized by three key parameters, where $p$ is the root node level, $q$ is the source node level, and $\mathcal{L}$ represents a level index set which consists of the level indices of each descendant SPC node, i.e., $\mathcal{L} = \{r | \mathcal{N}^{L_r+1}_{r}\ \textrm{is an SPC node}, q \leq r < p\}$ with $L_r = 2^{p-r}(i-1)+1$. Note that the decoding latencies of the existing fast SC decoding algorithms increase linearly with $p-q$, therefore we define $d=p-q$ as the node depth to reflect the parallelism of the SR1/SPC node. Besides, without loss of generality, we assume that the level indices are sorted in ascending order. For instance, a $\mathcal{P}(32,27)$ polar code with 
\begin{equation*}
	\bm{c} = \{\overbrace{0,0,0,1}^{\textrm{REP}},\overbrace{0,1,1,1}^{\textrm{SPC}},\overbrace{0,1,\ldots,1}^{\textrm{SPC}},\overbrace{1,\ldots,1}^{\textrm{Rate-1}}\}
\end{equation*}
can be represented as an SR1/SPC node with $\operatorname{NS}(5,2,\{2,3\})$. 

\begin{table*}[tb] \scriptsize\color{black}
	\caption{Special Cases of SR1/SPC Node}
	\centering
	\textcolor{black}{
	\begin{tabular}{ccc}\hline
		\multirow{2}{*}{Node Type} & \multicolumn{2}{c}{Node Structure} \\ \cline{2-3} & Parameters & Source Node \\ \hline
		P-01 \cite{Ercan2019Operation} & $\operatorname{NS}(p,p-1,{\emptyset})$ & Rate-0 \\
		P-0SPC \cite{Ercan2019Operation} & $\operatorname{NS}(p,p-1,\{p-1\})$ & Rate-0 \\
		P-R1 \cite{Ercan2019Operation} & $\operatorname{NS}(p,p-1,{\emptyset})$ & Rate-C \\
		P-RSPC \cite{Ercan2019Operation} & $\operatorname{NS}(p,p-1,\{p-1\})$ & Rate-C \\
		Type III \cite{Hanif2017Fast} & $\operatorname{NS}(p,1,{\emptyset})$ & Rate-0 \\
		Type IV \cite{Hanif2017Fast} & $\operatorname{NS}(p,2,{\emptyset})$ & REP \\
		G-PC \cite{Condo2018Generalized} & $\operatorname{NS}(p,q,{\emptyset})$ & Rate-0 \\
		RG-PC \cite{Condo2018Generalized} & $\operatorname{NS}(p,q,\mathcal{L}), \mathcal{L} \not= \emptyset$ & Rate-0 \\
		EG-PC \cite{Zheng2021Threshold} & $\operatorname{NS}(p,q,{\emptyset})$ & Rate-0/REP \\ 
		SR1 & $\operatorname{NS}(p,q,{\emptyset})$ & Rate-C \\
		SSPC & $\operatorname{NS}(p,q,\{q,q+1,\ldots,p-1\})$ & Rate-C \\ \hline
	\end{tabular}}
	\label{tab1}
	\vspace{-0.2cm}
\end{table*}

{\color{black}Since the source node can be any generic node, the proposed SR1/SPC node can envelop various kinds of special nodes. As shown in Table~\ref{tab1}, most of the existing special nodes can be viewed as special cases of the SR1/SPC node. However, the existing special nodes either have smaller node depth (parallelism), or are rarely distributed (see Table~\ref{tab2}), which both lead to limited decoding latency reduction. Note that the case of $\mathcal{L} = \{q,q+1,\ldots,p-1\}$ or $\mathcal{L} = \emptyset$, an SR1/SPC node will reduce to a sequence Rate-1 (SR1) or a sequence SPC (SSPC) node, respectively. Besides, it is worth mentioning that exploring other types of descendant nodes may lead to the discovery of more special node types that can be decoded efficiently, however further exploration is left for future work.}

\vspace{-0.5em}
\subsection{Induced Parity Constraints}
{\color{black}
As pointed out in \cite{Condo2018Generalized}, each frozen bit at the leaf level of the decoding tree will impose a parity constraint on the possible codewords at the intermediate levels. This means that the codeword at the node root need to satisfy the parity constraints to ensure its validity, otherwise there would be decoding performance loss. For instance, the decoding of the RG-PC node introduced in \cite{Condo2018Generalized} ignored the parity constraints brought by the frozen bits in the descendant nodes, which leads to significant performance degradation at high SNR scenarios. In the following, we analyse the parity constraints imposed on the root of the SR1/SPC node.}

For an arbitrary SR1/SPC node, the parity constraints induced by the frozen bits in the source node can be obtained as follows:
\begin{theorem} \label{theorem:1}
	For an SR1/SPC node $\mathcal{N}^{i}_{p}$, the source node at level $q$ will impose the following $2^q$ parallel parity constraints (P-PC) on the root node at level $p$:
	\begin{equation}
		\begin{aligned}
			\bigoplus\limits_{j=1}^{2^d} \beta^{i}_{p}[(j-1)2^q+k] = \beta^{L_q}_{q}[k], \qquad k \in \{1,2,\ldots,2^q\}.
		\end{aligned}
		\label{eqn:12}
	\end{equation}
\end{theorem}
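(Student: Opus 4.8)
The plan is to exploit the recursive structure of the polar transform $\bm{G}_N = \bm{F}^{\otimes n}$ together with the special structure of the SR1/SPC node, namely that every descendant strictly to the right of the source node at level $q$ is a Rate-1 or SPC node. First I would observe that the codeword $\beta^{i}_{p}[1:2^p]$ of the root node is obtained from the codewords of its $2^{p-q}$ descendants at level $q$ by repeated application of the combination rule \eqref{eqn:4}. Writing this out, the root codeword is the product of the stacked level-$q$ codewords with $\bm{F}^{\otimes(p-q)}$; equivalently, for each position $k \in \{1,\ldots,2^q\}$, the subvector $\beta^{i}_{p}[k:2^q:2^p]$ equals $[\beta^{L_q}_{q}[k],\beta^{L_q+1}_{q}[k],\ldots,\beta^{L_q+2^{p-q}-1}_{q}[k]]\,\bm{F}^{\otimes(p-q)}$, where the source node is the one indexed by $L_q$. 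The key fact I would then invoke is that the first row of $\bm{F}^{\otimes m}$ is the all-ones vector $(1,1,\ldots,1)$ over $\mathbb{F}_2$ (easily proved by induction on $m$, since the first row of $\bm{F}$ is $(1,0)$ and Kronecker structure gives $(1,0)\otimes$-blocks that concatenate two copies of the previous first row). Reading off the \emph{first} component of the transformed subvector for each $k$ therefore yields $\bigoplus_{j=1}^{2^{p-q}}\beta^{i}_{p}[(j-1)2^q+k] = \beta^{L_q}_{q}[k]$, which is exactly \eqref{eqn:12}.

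The remaining point to address is why this relation is a genuine \emph{constraint} rather than merely a definitional identity — i.e., why it needs to be "satisfied" for fast decoding to be lossless. Here I would argue as follows: in the fast decoder one estimates the Rate-1 and SPC descendant codewords independently (by hard decision, possibly with a single parity correction for SPC), and these hard decisions determine $\beta^{i}_{p}[(j-1)2^q+k]$ for all $j \ge 2$ wherever the corresponding leaf is an information bit; the value $\beta^{L_q}_{q}[k]$ comes from separately decoding the source node from its own LLRs \eqref{eqn:9}–\eqref{eqn:11}. The ML codeword of $\mathcal{N}^{i}_{p}$ must lie in $\mathbb{B}^{i}_{p}$, and \eqref{eqn:12} is precisely the set of linear relations that membership in $\mathbb{B}^{i}_{p}$ imposes via the structure of the source node. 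So the theorem's content is the explicit form \eqref{eqn:12} of these relations, and the proof is the algebraic derivation above.

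Concretely, the steps I would carry out, in order, are: (1) establish by induction on $m$ that the top row of $\bm{F}^{\otimes m}$ is all-ones; (2) unroll \eqref{eqn:4} across levels $p, p-1, \ldots, q+1$ to express $\beta^{i}_{p}[k:2^q:2^p]$ as $\bm{\beta}\,\bm{F}^{\otimes(p-q)}$ where $\bm{\beta}$ is the row of level-$q$ descendant codewords at offset $k$ — being careful that the left–right ordering of children in the tree matches the row ordering induced by the Kronecker factorization, so that the source node sits in the first coordinate of $\bm{\beta}$; (3) take the first coordinate of this vector identity to obtain \eqref{eqn:12}, noting that the source node at index $L_q = 2^{p-q}(i-1)+1$ is indeed the leftmost descendant. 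I expect step (2) — getting the indexing/ordering bookkeeping exactly right so that "leftmost descendant" ↔ "first Kronecker coordinate" ↔ index $L_q$ — to be the main obstacle; the algebra itself is routine once that correspondence is pinned down, and it is worth stating it carefully since the whole point of the paper is that the source node is the \emph{left}most node (in contrast to the SR0/REP node, where it is the rightmost).
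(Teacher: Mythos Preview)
Your overall plan matches the paper's: express $\beta^{i}_{p}$ in terms of the level-$q$ codewords via the Kronecker factorization of the polar transform, then use a column identity of $\bm{F}^{\otimes(p-q)}$ to isolate $\beta^{L_q}_{q}[k]$. However, your key lemma is misstated. The first row of $\bm{F}^{\otimes m}$ is $(1,0,\ldots,0)$, not the all-ones vector; you even note that the first row of $\bm{F}$ is $(1,0)$, and Kronecker powers of $(1,0)$ yield $(1,0,\ldots,0)$, which contradicts the claim you immediately make. What is actually true is that the first \emph{column} of $\bm{F}^{\otimes m}$ is all-ones, or equivalently (since $\bm{F}^{\otimes m}$ is its own inverse over $\mathbb{F}_2$) that the XOR of all columns of $\bm{F}^{\otimes m}$ equals $(1,0,\ldots,0)^T$. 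The paper uses precisely this last formulation: it writes $\beta^{i}_{p}[1:2^p] = \bm{\beta}_q\,(\bm{F}^{\otimes(p-q)} \otimes \bm{I}_{2^q})$, then sums the columns indexed by $(j-1)2^q+k$ and invokes $\bigoplus_{j}(\bm{G}_{2^{p-q}})_j = (1,0,\ldots,0)^T$ to collapse the sum to $\beta^{L_q}_{q}[k]$.

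With the row/column confusion repaired, your step~(3) should read: since $\bm{F}^{\otimes(p-q)}$ is self-inverse, $[\beta^{L_q}_{q}[k],\ldots] = \beta^{i}_{p}[k:2^q:2^p]\,\bm{F}^{\otimes(p-q)}$, so extracting the first entry means dotting $\beta^{i}_{p}[k:2^q:2^p]$ with the all-ones first column, which gives $\bigoplus_{j=1}^{2^{p-q}}\beta^{i}_{p}[(j-1)2^q+k]$. After this fix your argument coincides with the paper's; the only cosmetic difference is that the paper keeps the $\otimes\,\bm{I}_{2^q}$ block structure throughout rather than slicing by $k$ first. Your second paragraph, arguing why \eqref{eqn:12} is a genuine constraint, is not part of the proof itself---the paper treats the theorem as the purely algebraic identity and separately remarks that it holds regardless of the descendant node types.
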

\begin{proof}
Please see Appendix~\ref{appendex:1}.
\end{proof}

For an SR1 node, since there is no other frozen bit except for those in the source node, we do not need to consider other parity constraints except the P-PCs in \eqref{eqn:12}. However, for a general SR1/SPC node, each of its descendant SPC node will induce an additional parity constraint on its root node, which is shown as follows:

\begin{theorem}	\label{theorem:2}
	For an SR1/SPC node $\mathcal{N}^{i}_{p}$, the descendant SPC node at level $r$, i.e., $N_r^{L_r+1}$, will impose a segmental parity constraint (S-PC) on the root node at level $p$, which is given by 
	\begin{equation}
		\begin{aligned}
			\bigoplus\limits_{j=1}^{2^{p-r-1}} \bigoplus\limits_{k=1}^{2^r} \beta^{i}_{p}[(2j-1)2^r+k] = 0. 
		\end{aligned}
		\label{eqn:13}
	\end{equation}
\end{theorem}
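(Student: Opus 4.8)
The plan is to rewrite the root's hard-decision vector $\beta^{i}_{p}[1:2^{p}]$ in terms of the codewords of the $2^{p-r}$ descendants of $\mathcal{N}^{i}_{p}$ at level $r$, substitute this into the left-hand side of \eqref{eqn:13}, and then show by a modulo-$2$ multiplicity count that every descendant codeword's bit-sum cancels except that of the SPC node $\mathcal{N}^{L_r+1}_{r}$, which vanishes by the defining property of an SPC node.

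First I would establish, by induction on $p-r$ using the butterfly recursion \eqref{eqn:4}, the block identity relating $\beta^{i}_{p}$ to its level-$r$ descendants. Writing $\bm{\delta}_{m}=\beta^{2^{p-r}(i-1)+m}_{r}[1:2^{r}]$ for $m\in\{1,\ldots,2^{p-r}\}$ --- so that $\bm{\delta}_{1}$ is the leftmost descendant (which carries the source node) and $\bm{\delta}_{2}=\beta^{L_r+1}_{r}[1:2^{r}]$ is the descendant SPC node $\mathcal{N}^{L_r+1}_{r}$ --- the $m$-th length-$2^{r}$ block of $\beta^{i}_{p}$ satisfies
\begin{equation*}
  \beta^{i}_{p}[(m-1)2^{r}+k] \;=\; \bigoplus_{m'\,:\,(m-1)\,\preceq\,(m'-1)} \delta_{m'}[k], \qquad k\in\{1,\ldots,2^{r}\},
\end{equation*}
where $a\preceq b$ means that $a$ is a bitwise submask of $b$ when both are written as $(p-r)$-bit integers; equivalently, stacking the length-$2^{r}$ blocks of $\beta^{i}_{p}$ and the $\bm{\delta}_{m}$ into $2^{p-r}\times 2^{r}$ arrays, the former equals $(\bm{F}^{\otimes(p-r)})^{\mathsf{T}}$ times the latter. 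The same identity, taken down to level $q$, is what underlies Theorem~\ref{theorem:1}.

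Next I would substitute this identity into \eqref{eqn:13}. The indices $(2j-1)2^{r}+k$ with $j\in\{1,\ldots,2^{p-r-1}\}$ and $k\in\{1,\ldots,2^{r}\}$ are exactly the entries of the even-numbered length-$2^{r}$ blocks of $\beta^{i}_{p}$, i.e.\ the blocks with index $m=2j$; hence, writing $P(\bm{\delta})=\bigoplus_{k=1}^{2^{r}}\delta[k]$ for the bit-sum of a block, the left-hand side of \eqref{eqn:13} equals $\bigoplus_{m\ \mathrm{even}}\,\bigoplus_{m'\,:\,(m-1)\preceq(m'-1)} P(\bm{\delta}_{m'})$. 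For a fixed $m'$, the number of even $m$ (equivalently, odd $m-1$) with $(m-1)\preceq(m'-1)$ is the number of submasks of $m'-1$ whose least significant bit equals $1$, which is $0$ when $m'-1$ is even and $2^{\operatorname{wt}(m'-1)-1}$ when $m'-1$ is odd, where $\operatorname{wt}(\cdot)$ denotes Hamming weight. This count is odd only when $m'-1$ is odd with $\operatorname{wt}(m'-1)=1$, i.e.\ only for $m'=2$, so the left-hand side of \eqref{eqn:13} collapses to $P(\bm{\delta}_{2})$. Finally, $\bm{\delta}_{2}$ is the codeword of an SPC node, whose leftmost input bit is frozen to $0$; since the first column of $\bm{G}_{2^{r}}=\bm{F}^{\otimes r}$ is the all-ones vector, that frozen bit is precisely the bit-sum of $\bm{\delta}_{2}$, so $P(\bm{\delta}_{2})=0$ and \eqref{eqn:13} follows.

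The step I expect to need the most care is the index bookkeeping: confirming that $\{(2j-1)2^{r}+k\}$ picks out precisely the even blocks, and tracking the parities of the multiplicities so that over $\mathrm{GF}(2)$ only $\bm{\delta}_{2}$ survives --- in particular, the source-carrying block $\bm{\delta}_{1}$, whose bit-sum is unconstrained, must drop out, which it does because $m'-1=0$ is even. The block identity itself and the even-weight property of SPC codewords are routine.
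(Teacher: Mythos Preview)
Your proposal is correct and follows essentially the same approach as the paper: both express $\beta^{i}_{p}$ in terms of the level-$r$ descendant codewords via the block identity $\beta^{i}_{p}[1:2^p]=\bm{\beta}_{r}(\bm{F}^{\otimes(p-r)}\otimes\bm{I}_{2^r})$, reduce the left-hand side of \eqref{eqn:13} to the bit-sum of the SPC descendant $\beta^{L_r+1}_{r}$, and then invoke the SPC even-parity constraint. The only presentational difference is that the paper first fixes $k$ and uses the column-sum identity $\bigoplus_{j=1}^{2^{p-r-1}}(\bm{G}_{2^{p-r}})_{2j}=(0,1,0,\ldots,0)^{T}$ to obtain the stronger intermediate equality $\bigoplus_{j}\beta^{i}_{p}[(2j-1)2^{r}+k]=\beta^{L_r+1}_{r}[k]$ before XOR-ing over $k$, whereas you perform both XORs at once via your submask multiplicity count---these are two phrasings of the same computation.
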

\begin{proof}
Please see Appendix~~\ref{appendex:2}.
\end{proof}

\begin{figure*}[!t] 
	\centering
	\setlength{\abovecaptionskip}{-0.1cm} 
	\setlength{\belowcaptionskip}{-0.1cm} 
	\includegraphics[width=0.95\textwidth]{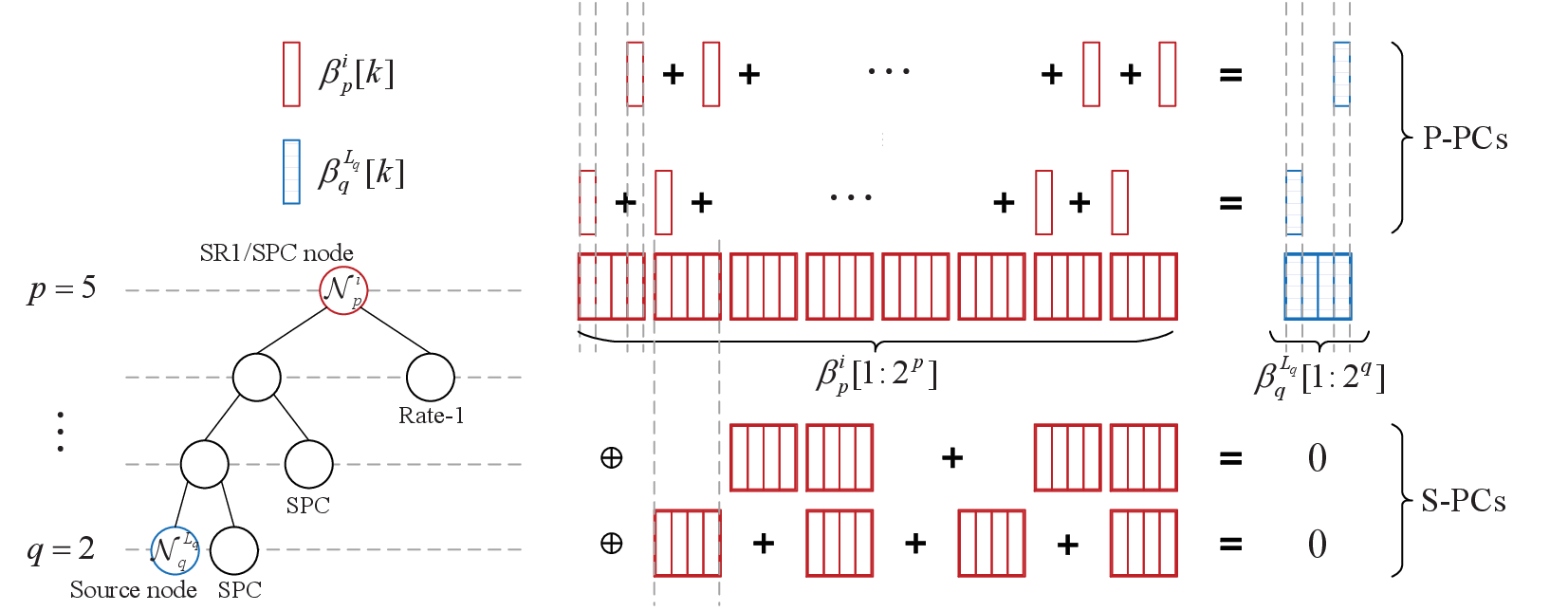}
	\caption{Parity constraints of an SR1/SPC node $\operatorname{NS}(5,2,\{2,3\})$.} \label{fig:3}
	\vspace{-0.5cm}
\end{figure*}

According to Theorems~\ref{theorem:1} and \ref{theorem:2}, in order to decode an SR1/SPC node, the aforementioned two types of parity constraints, i.e., P-PCs and S-PCs, derived from the source and descendant SPC nodes, should both be taken into consideration. For illustration purpose, Fig.~\ref{fig:3} depicts all the P-PCs and S-PCs of an SR1/SPC node denoted by $\operatorname{NS}(5,2,\{2,3\})$. It can be observed that the source node with length 4 imposes 4 P-PCs on the root node, while the descendant SPC nodes at levels 2 and 3 introduce two separate S-PCs. Besides, we note that the P-PCs introduce parity constraints on the root bits that are equally spaced in $\beta_p^i[1:2^p]$, whereas the bits involved in the S-PCs are located in different segments of the root node.
\vspace{-0.5em}
\textcolor{black}{
\begin{remark}
\emph{It is also worth mentioning that the induced parity constrains only depend on the SR1/SPC node structure and are irrespective of the employed decoding algorithms. As a result, when the SR1/SPC node is adopted in the case of other SC-based decoding (e.g., the renowned SCL and SC-Flip \cite{Afisiadis2014low} decoders), each path should also satisfy the P-PCs and S-PCs in order to ensure its validity. Such considerations may facilitate other fast SC-based decoding of the SR1/SPC node, which is left for future work.}
\end{remark}
}

\vspace{-0.5em}
\section{Fast Decoding of SR1/SPC Nodes}
\label{section:4}
{\color{black}
The results in Theorems~\ref{theorem:1} and \ref{theorem:2} indicate that the decoding output of SR1/SPC node should satisfy the P-PCs and S-PCs simultaneously. However, each S-PC is dependent on the P-PCs and other S-PCs due to the bits in common, which makes it difficult to develop an efficient decoding algorithm since flipping one bit may impact the parity checks of both P-PCs and S-PCs. To tackle this issue, we present an efficient decoding algorithm with two stages to progressively satisfy the P-PCs and S-PCs. In the first stage, the P-PCs are corrected by temporarily ignoring the S-PCs, and in the second stage, the S-PCs are corrected without violating the P-PCs, such that all the parity check constraints are satisfied simultaneously. In each stage, we apply bit-flipping operations on the hard-decision codeword to obtain candidate codewords. Based on \eqref{eqn:6}, a penalty metric is proposed to measure the reliability of each candidate codeword. Finally, we select the codeword which is penalised the least as the decoding output, and thus the near-ML decoding performance can be achieved. Note that existing highly-parallel ML decoding variants, such as the guessing random additive noise decoding algorithm \cite{Duffy2019Capacity} and the syndrome-check decoding algorithm \cite{Hashemi2021Successive} can also achieve near-ML performance, but are usually computational-unfriendly.

In the following, we will present the proposed decoding algorithm for the SR1/SPC nodes, which are divided into two stages, and then a simplification technique is further proposed to reduce the computational complexity. }

\vspace{-0.5em}
\subsection{Correcting the P-PCs}
First, we divide both the LLRs and codeword of the SR1/SPC node into $2^{q}$ parts such that
\begin{equation}
	\begin{aligned}
		\alpha^{\textrm{P-PC}}_{k}[1:2^d] = \alpha^{i}_{p}[k:2^{q}:2^p], \qquad k \in \{1,2,\ldots,2^{q}\}, \\
		\beta^{\textrm{P-PC}}_{k}[1:2^d] = \beta^{i}_{p}[k:2^{q}:2^p], \qquad k \in \{1,2,\ldots,2^{q}\},
	\end{aligned}
	\label{eqn:14}
\end{equation}
where $\alpha^{\textrm{P-PC}}_{k}[1:2^d]$ denotes the LLR subvector associated with the $k$-th P-PC, and $\beta^{\textrm{P-PC}}_{k}[1:2^d]$ represents the corresponding codeword of $\alpha^{\textrm{P-PC}}_{k}[1:2^d]$. Then, the LLR vector $\alpha^{L_{q}}_{q}[1:2^{q}]$ of the source node can be obtained by 
\begin{equation}
	\begin{aligned}
		\alpha^{L_{q}}_{q}[k] = \prod\limits_{j=1}^{2^d} \operatorname{sgn}(\alpha^{\textrm{P-PC}}_{k}[j]) \operatorname{min}(|\alpha^{\textrm{P-PC}}_{k}[1:2^d]|), \qquad  k \in \{1,2,\ldots,2^{q}\}.
	\end{aligned}
	\label{eqn:15}
\end{equation}

Subsequently, the source node is decoded either by plain SC decoding, or by fast decoding techniques if it has a special pattern, to obtain the codeword $\beta^{L_{q}}_{q}[1:2^{q}]$. According to Theorem~\ref{theorem:1}, each LLR sub-vector $\alpha^{\textrm{P-PC}}_{k}[1:2^d]$ can be interpreted as an SPC subcode and the corresponding codeword bits should satisfy the parity check $\beta^{L_{q}}_{q}[k]$.\footnote{\color{black}Slightly different from the definition of the conventional SPC node, the used `SPC' actually refers to a generalized SPC node. The conventional one needs to satisfy an even parity check, whereas the parity check of a generalized SPC node can be either even or odd, depending on $\beta^{L_{q}}_{q}[k]$.} In particular, parity check and bit-flipping will be performed by the Wagner decoder to decode these SPC subcodes \cite{Silverman1954Coding}. For the $k$-th SPC subcode, the parity check of the $k$-th P-PC, \textcolor{black}{denoted by $\gamma_{\textrm{P-PC}}[k]$}, and the least reliable position for bit-flipping, \textcolor{black}{denoted by $\rho[k]$}, are respectively determined by
\begin{equation}
	\begin{aligned}
		\gamma_{\textrm{P-PC}}[k] = \bigoplus\limits_{j=1}^{2^d} \operatorname{HD}(\alpha^{\textrm{P-PC}}_{k}[j]) \oplus \beta^{L_{q}}_{q}[k], 
	\end{aligned}
	\label{eqn:16}
\end{equation}
\begin{equation}\color{black}
	\begin{aligned}
		\rho[k] = \mathop{\arg\min}\limits_{j \in \{1,\ldots,2^d\}}{|\alpha^{\textrm{P-PC}}_{k}[j]|}.
	\end{aligned}
	\label{eqn:17}
\end{equation}
\vspace{-0.5em}
Finally, the codeword of the SR1/SPC node with all P-PCs satisfied is obtained by applying bit-flipping operations on the hard-decision codeword, which is shown as follows:
\begin{equation}
	\begin{aligned}
		\beta^{\textrm{P-PC}}_{k}[j] = \begin{cases}
			\operatorname{HD}(\alpha^{\textrm{P-PC}}_{k}[j]) \oplus \gamma_{\textrm{P-PC}}[k], & \textcolor{black}{\textrm{if}\ j = \rho[k]} \\ 
			\operatorname{HD}(\alpha^{\textrm{P-PC}}_{k}[j]), & \textrm{otherwise}
		\end{cases}.
	\end{aligned}
	\label{eqn:18}
\end{equation}

\begin{algorithm}[!t]
	\label{alg1}
	\caption{Correcting the P-PCs}
	\LinesNumbered
	\KwIn{LLR vector $\alpha^{i}_{p}[1:2^p]$ of the SR1/SPC node $\mathcal{N}^{i}_{p}$, node parameters $\operatorname{NS}(p,q,\mathcal{L})$}
	\KwOut{Codeword vector $\beta^{i}_{p}[1:2^p]$ of $\mathcal{N}^{i}_{p}$}
	
	\If{$\mathcal{N}^{L_{q}}_{q}$ is a Rate-0 node}{
		$\beta^{L_{q}}_{q}[1:2^{q}] = \mathbf{0}$\;}
	\Else{
		Calculate $\alpha^{L_{q}}_{q}[1:2^{q}]$ using \eqref{eqn:15}\;
		Obtain $\beta^{L_{q}}_{q}[1:2^{q}]$ by decoding $\mathcal{N}^{L_{q}}_{q}$ using $\alpha^{L_{q}}_{q}[1:2^q]$\;
	}
	Obtain $\beta^{i}_{p}[1:2^p]$ according to \eqref{eqn:18}\;

	\Return $\beta^{i}_{p}[1:2^p]$
\end{algorithm}

The procedure of correcting P-PCs is summarized in Algorithm~\ref{alg1}. Note that the identification of the source node can be performed offline before decoding, according to \eqref{eqn:18}. In Algorithm~\ref{alg1}, the source node is decoded according to its node type (lines 1-5). In particular, in case it is an Rate-0 node, its codeword can be obtained directly (line 2) since there is no information bit to be estimated. Otherwise, LLR calculation is performed (line 4), followed by a specific decoding procedure to obtain the codeword of the source node (line 5). Since the source node might be composed of one or several special nodes, the fast decoding techniques in \cite{Alamdar2011Simplified,Sarkis2013Increasing,Sarkis2014Fast,Hanif2017Fast,Condo2018Generalized,Zheng2021Threshold}, can be applied. Finally, in line 6, $2^{q}$ parallel Wagner decoders are implemented to decode the root node according to \eqref{eqn:18}. Note that the SR1 node can be directly decoded by Algorithm~\ref{alg1} since it contains no S-PC.

\vspace{-0.5em}
\subsection{Correcting the S-PCs}
Similar to \eqref{eqn:14}, the LLRs and codeword of the SR1/SPC node are divided into $2^d$ parts such that
\begin{equation}
	\begin{aligned}
		\alpha^{\textrm{S-PC}}_{k}[1:2^q] = \alpha^{i}_{p}[(k-1)2^q+1:k2^q], &\qquad  k \in \{1,2,\ldots,2^d\} \\
		\beta^{\textrm{S-PC}}_{k}[1:2^q] = \beta^{i}_{p}[(k-1)2^q+1:k2^q], &\qquad  k \in \{1,2,\ldots,2^d\}, 
	\end{aligned}
	\label{eqn:19}
\end{equation}
where $k$ denotes the segment index, $\alpha^{\textrm{S-PC}}_{k}[1:2^q]$ denotes the LLR subvector associated with the root bits from $(k-1)2^q+1$ to $k2^q$ in the $k$-th segment, and $\beta^{\textrm{S-PC}}_{k}[1:2^q]$ represents the corresponding codeword of $\alpha^{\textrm{S-PC}}_{k}[1:2^q]$. For the SR1/SPC node of depth $d$, let $\gamma_{\textrm{S-PC}}[1:d]$ denote the S-PC parity check vector, which can be calculated based on Theorem~\ref{theorem:2} as follows:
\begin{equation}
	\begin{aligned}
		\gamma_{\textrm{S-PC}}[t] = \begin{cases}
			\bigoplus\limits_{j=1}^{2^{d-t}} \bigoplus\limits_{k=1}^{2^{q+t-1}} \beta^{i}_{p}[(2j-1)2^{q+t-1}+k] = \bigoplus\limits_{j=1}^{2^{d-t}}\bigoplus\limits_{k=(2j-1)2^{t-1}+1}^{j2^t}\bigoplus\limits_{m=1}^{2^q} \beta^{\textrm{S-PC}}_{k}[m] \\ \hspace{0.45\textwidth} \textrm{if}\  \mathcal{N}^{L_r+1}_{r}\ \textrm{is an SPC node}\\
			-1, \hspace{0.4\textwidth} \textrm{if}\  \mathcal{N}^{L_r+1}_{r}\ \textrm{is a Rate-1 node}
		\end{cases},
	\end{aligned}
	\label{eqn:20}
\end{equation}
where $t = r-q+1 \in \{1,2,\ldots,d\}$ and $-1$ indicates that the descendant node $\mathcal{N}^{L_r+1}_{r}$ has no S-PC as it is a Rate-1 node. Note that $\gamma_{\textrm{S-PC}}[t]$ is also able to indicate whether the $t$-th S-PC is satisfied or not, since a satisfied S-PC leads to an even check and an unsatisfied one leads to the opposite. Besides, for an arbitrary even numbers of segments, the $m$-th bits in these segments should be flipped to maintain the $m$-th P-PCs ($m \in \{1,2\ldots,2^q\}$). Thus, let $\mathcal{E}=\{k_1,k_2,m\}$ denote a flip coordinate set, which indicates a pair of bit positions to be flipped, i.e., $\beta^{\textrm{S-PC}}_{k_1}[m]$ and $\beta^{\textrm{S-PC}}_{k_2}[m]$, where $k_1,k_2 \in \{1,2,\ldots,2^d\}$ are segment indices and $k_1 < k_2$. For simplicity, we drop the bit index $m$ and $\mathcal{E}=\{k_1,k_2,m\}$ is simply written as $\mathcal{E}=\{k_1,k_2\}$ in the following. Based on $\mathcal{E}$, we are able to determine the feasible flip coordinates and flipping the corresponding bits can correct the S-PCs with odd checks without violating the P-PCs.

As mentioned above, a feasible flip coordinate should correct the odd S-PC checks while maintain the even ones. Therefore, for a feasible flip coordinate $\mathcal{E}=\{k_1,k_2\}$, only one of the two segment indices should be involved in the S-PCs with odd checks. Besides, $k_1$ and $k_2$ should both be involved or not involved by the remaining S-PCs such that these S-PC checks can be maintained. For simplicity, we say two segments are equivalent if they are both involved or not involved in the $t$-th S-PC. On the contrary, two segments are not equivalent if only one of them is involved in the $t$-th S-PC. From \eqref{eqn:20}, we can readily determine whether a segment is related to the $t$-th S-PC or not. Specifically, segments with indices in
\begin{equation}
	\begin{aligned}
		\mathcal{I}_{t} = \{&2^{t-1}+1,2^{t-1}+2,\ldots,2^t,\ldots, 
		(2j-1)2^{t-1}+1,(2j-1)2^{t-1}+2,\ldots,j2^t,\ldots, \\
		&2^d-2^{t-1}+1,2^d-2^{t-1}+2,\ldots,2^d\},
	\end{aligned}
	\label{eqn:21}
\end{equation}
are involved in the $t$-th S-PC, whereas those with indices in
\begin{equation}
	\begin{aligned}
		\mathcal{I}_{t}^c = \{&1,2,\ldots,2^{t-1},\ldots, 
		(j\!-\!1)2^{t-1}\!+\!1,(j\!-\!1)2^{t-1}\!+\!2,\ldots,(2j\!-\!1)2^{t-1},\ldots, \\
		&2^d-2^t+1,2^d-2^t+2,\ldots,2^d-2^{t-1}\},
	\end{aligned}
	\label{eqn:22}
\end{equation}
are not involved, where $j \in \{1,2,\ldots,2^{d-t}\}$. By identifying the segments that impact each S-PC, we can obtain the following theorem to determine whether two segments are equivalent or not.
\begin{theorem}
	For the $t$-th S-PC, segments $k+\mu2^t$ with $\mu \in \{\lceil -k/2^t \rceil,\ldots,\lfloor (2^d-k)/2^t \rfloor\} \setminus \{0\}$ are equivalent to segment $k$. Besides, segments $k+(2\nu-1)2^{t-1}$ with $\nu \in \{\lceil -k/2^t+1/2 \rceil,\ldots,\lfloor (2^d-k)/2^t+1/2 \rfloor\}$ are not equivalent to segment $k$.
	\label{theorem:3}
\end{theorem}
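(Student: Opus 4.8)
The plan is to reduce both assertions in Theorem~\ref{theorem:3} to a single, easily checked membership test for the index set $\mathcal{I}_t$ in \eqref{eqn:21}, after which each claim collapses to a statement about the parity of one integer.

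First I would put \eqref{eqn:21} into closed form. For a fixed $j\in\{1,\ldots,2^{p-q-t}\}$, a segment index $s$ lies in the $j$-th group $\{(2j-1)2^{t-1}+1,\ldots,j2^t\}$ exactly when $(2j-1)2^{t-1}<s\le 2j\cdot 2^{t-1}$, i.e., when $\lceil s/2^{t-1}\rceil=2j$. Ranging over $j$ gives the characterization that a segment $s\in\{1,\ldots,2^{p-q}\}$ is involved in the $t$-th S-PC (belongs to $\mathcal{I}_t$) if and only if $\lceil s/2^{t-1}\rceil$ is even, and is not involved (belongs to $\mathcal{I}_t^c$ in \eqref{eqn:22}) if and only if $\lceil s/2^{t-1}\rceil$ is odd. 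Consequently two segments $s$ and $s'$ are equivalent with respect to the $t$-th S-PC precisely when $\lceil s/2^{t-1}\rceil$ and $\lceil s'/2^{t-1}\rceil$ have the same parity.

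With this test the two claims are immediate. For the equivalence part, write $s=k+\mu 2^t$ with $\mu\neq 0$ and $s\in\{1,\ldots,2^{p-q}\}$; since $2^t=2\cdot 2^{t-1}$ we get $\lceil s/2^{t-1}\rceil=\lceil k/2^{t-1}\rceil+2\mu$, which has the same parity as $\lceil k/2^{t-1}\rceil$, so $s$ and $k$ are equivalent. For the non-equivalence part, write $s=k+(2\nu-1)2^{t-1}$ within the same range; then $\lceil s/2^{t-1}\rceil=\lceil k/2^{t-1}\rceil+(2\nu-1)$, which has the opposite parity, so exactly one of $k$ and $s$ lies in $\mathcal{I}_t$ and the two are not equivalent. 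It then remains only to confirm that the stated index sets for $\mu$ and $\nu$ are exactly the integers for which the shifted index $k+\mu 2^t$ (respectively $k+(2\nu-1)2^{t-1}$) still lies in $\{1,\ldots,2^{p-q}\}$, together with the exclusion $\mu\neq 0$ that only discards $s=k$ itself; this is a routine manipulation of the inequalities $1\le k+\mu 2^t\le 2^{p-q}$ and $1\le k+(2\nu-1)2^{t-1}\le 2^{p-q}$ with the ceiling and floor functions.

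I expect the only real friction to be the bookkeeping in that last step — matching the ceiling/floor boundary values to the expressions in the statement at the extreme cases ($k$ near $1$ or near $2^{p-q}$) — rather than anything conceptual. Once the membership test of the second paragraph is in place, the rest is a one-line parity observation.
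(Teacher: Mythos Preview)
Your proposal is correct and takes a genuinely different route from the paper. The paper's Appendix~\ref{appendex:3} argues by case analysis and transitivity: it first shows directly, via inequality manipulations on the ranges in \eqref{eqn:21} and \eqref{eqn:22}, that $k\in\mathcal{I}_t\Rightarrow k+2^t\in\mathcal{I}_t$ and $k\in\mathcal{I}_t^c\Rightarrow k+2^t\in\mathcal{I}_t^c$, then chains these to reach general $\mu$; the inequivalence half is handled analogously for $\nu=1$ and then extended using the already-proved equivalence part. Your approach instead extracts the single closed-form criterion ``$s\in\mathcal{I}_t$ iff $\lceil s/2^{t-1}\rceil$ is even'' once and for all, after which both conclusions are immediate parity observations. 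This buys you a cleaner, non-inductive argument that handles all $\mu,\nu$ simultaneously and makes the structure of $\mathcal{I}_t$ transparent; the paper's version trades that insight for more elementary but repetitive inequality chasing. Your caveat about the bookkeeping for the $\mu,\nu$ ranges is accurate but, as you note, purely mechanical.
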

\vspace{-0.4cm}
\begin{proof}
Please refer to Appendix~\ref{appendex:3}.
\end{proof}

Theorem~\ref{theorem:3} implies that applying bit-flipping on the coordinate $\{k,k+\mu2^t\}$ is able to maintain the $t$-th S-PC, while performing bit-flipping on $\{k,k+(2\nu-1)2^{t-1}\}$ changes the $t$-th S-PC. For the special case that $\gamma_{\textrm{S-PC}}[1:d] = (0,\ldots,0,1)$, the feasible flip coordinates can be determined by resorting to the following lemma. 
\begin{lemma} \label{lemma:1}
	For SR1/SPC nodes, the feasible flip coordinates associated with $\gamma_{\textrm{S-PC}}[1:d] = (0,\ldots,0,1)$ are
	\begin{equation}
		\begin{aligned}
			\mathcal{E} = \{k,k+2^{d-1}\}, \qquad k \in \{1,2,\ldots,2^{d-1}\}.
		\end{aligned}
		\label{eqn:23}
	\end{equation}
\end{lemma}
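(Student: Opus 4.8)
The plan is to read the feasible flip coordinates straight off Theorem~\ref{theorem:3}. The hypothesis $\gamma_{\textrm{S-PC}}[1:p-q]=(0,\ldots,0,1)$ says that the $(p-q)$-th S-PC is the unique one with an odd check, while the S-PCs indexed by $t\in\{1,\ldots,p-q-1\}$ all have even checks. By the characterization recalled just before the lemma, a pair $\mathcal{E}=\{k_1,k_2\}$ with $1\le k_1<k_2\le 2^{p-q}$ is feasible if and only if: (i) exactly one of $k_1,k_2$ is involved in the $(p-q)$-th S-PC, i.e.\ $k_1$ and $k_2$ are \emph{not} equivalent with respect to it, so that flipping one bit in each of the two segments flips that odd check to even; and (ii) $k_1$ and $k_2$ are equivalent with respect to the $t$-th S-PC for every $t\in\{1,\ldots,p-q-1\}$, so that none of the even checks is disturbed. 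I would then show that \eqref{eqn:23} lists exactly the pairs meeting (i) and (ii).

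For the ``these pairs are feasible'' direction I would apply Theorem~\ref{theorem:3} with $k=k_1$. Taking $\nu=1$ in its ``not equivalent'' clause for $t=p-q$ shows that segment $k_1+2^{p-q-1}$ is not equivalent to $k_1$ with respect to the $(p-q)$-th S-PC, which is (i). For each $t\le p-q-1$, writing $2^{p-q-1}=2^{p-q-1-t}\cdot 2^t$ exhibits $k_1+2^{p-q-1}$ as $k_1+\mu 2^t$ with $\mu=2^{p-q-1-t}\ne 0$; it only remains to check that $\mu$ lies in the admissible range $\{\lceil -k_1/2^t\rceil,\ldots,\lfloor(2^{p-q}-k_1)/2^t\rfloor\}\setminus\{0\}$ of Theorem~\ref{theorem:3}. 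Since $k_1\ge 1$ the lower limit is $\le 0$, and since $k_1\le 2^{p-q-1}$ one has $(2^{p-q}-k_1)/2^t\ge 2^{p-q-1}/2^t=\mu$; hence $\mu$ is admissible and the ``equivalent'' clause of Theorem~\ref{theorem:3} gives (ii). So every $\mathcal{E}$ in \eqref{eqn:23} is feasible, and there are $2^{p-q-1}$ of them.

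For the converse I would show that a feasible pair must already have this form. The block structure in \eqref{eqn:21}--\eqref{eqn:22} shows that segment $k$ is involved in the $t$-th S-PC precisely according to the bit of weight $2^{t-1}$ in the binary expansion of $k-1$. Hence condition (ii) forces $k_1-1$ and $k_2-1$ to agree in bits $0,\ldots,p-q-2$, while condition (i) forces them to differ in bit $p-q-1$. As $k_1-1$ and $k_2-1$ lie in $\{0,\ldots,2^{p-q}-1\}$, are $(p-q)$-bit integers agreeing on all but the top bit, and satisfy $k_1-1<k_2-1$, this yields $k_2-1=(k_1-1)+2^{p-q-1}$, i.e.\ $k_2=k_1+2^{p-q-1}$, together with $k_1-1\le 2^{p-q-1}-1$, i.e.\ $k_1\in\{1,\ldots,2^{p-q-1}\}$, which is exactly \eqref{eqn:23}.

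The main obstacle I anticipate is the converse: Theorem~\ref{theorem:3} only enumerates the equivalent and non-equivalent segments lying on the arithmetic progression $k+j2^{t-1}$, so on its own it does not exclude feasible pairs whose indices differ in low-order bits. The way around this is the observation --- obtained by inspecting \eqref{eqn:21}--\eqref{eqn:22} --- that ``segment $k$ affects the $t$-th S-PC iff bit $t-1$ of $k-1$ is set''; once (i) and (ii) are rephrased as conditions on individual bits of $k_1-1$ and $k_2-1$, uniqueness is immediate. The remaining work --- the range checks and the count --- is routine.
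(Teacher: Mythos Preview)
Your argument is correct. The forward direction is the same in spirit as the paper's---both apply Theorem~\ref{theorem:3} to verify feasibility of the stated pairs---but your converse is genuinely different. The paper proceeds by iterative winnowing of the admissible \emph{interval} $k_2-k_1$: at $t=1$ one reads off from $\mathcal{I}_1,\mathcal{I}_1^c$ that the interval must be even, at $t=2$ Theorem~\ref{theorem:3} kills the odd multiples of $2$, and so on, until at $t=p-q-1$ only the interval $2^{p-q-1}$ survives; a final application of Theorem~\ref{theorem:3} at $t=p-q$ confirms inequivalence. Your route bypasses this recursion by extracting from \eqref{eqn:21}--\eqref{eqn:22} the bitwise criterion ``$k\in\mathcal{I}_t$ iff bit $t-1$ of $k-1$ is set,'' after which (i)--(ii) become constraints on individual bits of $k_1-1$ and $k_2-1$ and the conclusion is immediate. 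Your approach is shorter and makes the structure more transparent; the paper's approach has the minor advantage of staying entirely within the language of Theorem~\ref{theorem:3}, which it can do because at step $t$ the surviving intervals are already multiples of $2^{t-1}$ and hence fall under the theorem's scope---exactly the obstacle you flagged and then sidestepped.
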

\begin{proof}
	A feasible flip coordinate must contain the indices of two equivalent segments in order to maintain the S-PCs with even checks from $t=1$ to $t=d-1$. Besides, if the parity check of the $d$-th S-PC is odd, then it must be corrected by flipping two inequivalent segments. Starting from $t=1$, the indices of involved and uninvolved segments can be respectively obtained as $\mathcal{I}_{1}=\{2,4,6,\ldots,2^d\}$ and $\mathcal{I}_{1}^c=\{1,3,5,\ldots,2^d-1\}$ according to \eqref{eqn:21} and \eqref{eqn:22}. Thus, we can infer that for any feasible flip coordinate, the interval between two segment indices must be even, thus the set of feasible intervals can be determined by $\{2,4,6,\ldots,2^d\}$. Then, for the case of $t=2$, we can see that all the intervals with values $(2\nu-1)2^{t-1}$ are infeasible based on Theorem~\ref{theorem:3}, which reduces the set of feasible intervals to $\{4,8,12,\ldots,2^d\}$. Meanwhile, $\{4,8,12,\ldots,2^d\}$ is valid since two segments with intervals $\mu2^t$ are equivalent according to Theorem~\ref{theorem:3}. Furthermore, for the case of $t=3$, this set becomes $\{8,16,24,\ldots,2^d\}$ by resorting to Theorem~\ref{theorem:3} again. Repeat the above procedure multiple times until $t=d-1$, we can obtain that the only feasible interval is $2^{d-1}$. Finally, for the case of $t=d$, it can be readily proved that the segments with interval $2^{d-1}$ are inequivalent according to Theorem~\ref{theorem:3}. This thus completes the proof.
\end{proof}

Lemma~\ref{lemma:1} only provides the feasible flip coordinates for the special case of $\gamma_{\textrm{S-PC}}[1:d] = (0,\ldots,0,1)$. Next, we will construct the feasible flip coordinates for more general cases via induction. Given an depth-$(d-1)$ SR1/SPC node and a feasible flip coordinate $\mathcal{E}=\{k_1,k_2\}$ associated with $\gamma_{\textrm{S-PC}}[1:d-1]$, let us consider an extended SR1/SPC node of depth $d$, whose left child node is the original SR1/SPC node of depth $d-1$ and the right child node is a Rate-1 or an SPC node, which introduces an additional S-PC with parity check $\gamma_{\textrm{S-PC}}[d]$. Then, the feasible flip coordinates for the new SR1/SPC node of depth $d$ can be constructed based on the original one of depth $d-1$, which are shown as follows:
\begin{lemma}
	For SR1/SPC nodes, given a feasible flip coordinate $\mathcal{E}=\{k_1,k_2\}$ ($k_1,k_2 \in \{1,2,\ldots,$ $2^{d-1}\}$) associated with $\gamma_{\textrm{S-PC}}[1:d-1]$ and define four new flip coordinates $\mathcal{E}_1=\{k_1,k_2\}, \mathcal{E}_2=\{k_1+2^{d-1},k_2+2^{d-1}\}, \mathcal{E}_3=\{k_1+2^{d-1},k_2\}, \mathcal{E}_4=\{k_1,k_2+2^{d-1}\}$ based on $\mathcal{E}$, then the following flip coordinates associated with $\gamma_{\textrm{S-PC}}[1:d]$ are feasible for different cases of $\gamma_{\textrm{S-PC}}[d]$:
	\begin{equation}
		\begin{aligned}			
			\mathcal{E}_1, \mathcal{E}_2,\quad \text{if}\ \gamma_{\textrm{S-PC}}[d] = 0,
		\end{aligned}
		\label{eqn:24}
	\end{equation}
	\begin{equation}
		\begin{aligned}
			\mathcal{E}_3, \mathcal{E}_4,\quad \text{if}\ \gamma_{\textrm{S-PC}}[d] = 1,
		\end{aligned}
		\label{eqn:25}
	\end{equation}
	\begin{equation}
		\begin{aligned}
			\mathcal{E}_1, \mathcal{E}_2,
			\mathcal{E}_3, \mathcal{E}_4,\quad \text{if}\ \gamma_{\textrm{S-PC}}[d] = -1.
		\end{aligned}
		\label{eqn:26}
	\end{equation}
	\label{lemma:2}
	\vspace{-0.4cm}
\end{lemma}

\begin{proof}
	Firstly, for the case of $\gamma_{\textrm{S-PC}}[d] = 0$, the two segment indices in a feasible flip coordinate should be equivalent, while keeping the parity checks of the other S-PCs unchanged. It can be observed from \eqref{eqn:21} and \eqref{eqn:22} that $k_1$ and $k_2$ are both not involved in the $d$-th S-PC, while $k_1+2^{d-1}$ and $k_2+2^{d-1}$ are both involved, thus $\mathcal{E}_1=\{k_1,k_2\}$ and $\mathcal{E}_2=\{k_1+2^{d-1},k_2+2^{d-1}\}$ can maintain the $d$-th S-PC. Besides, according to Theorem~\ref{theorem:3}, segments $k$ and $k+2^{d-1}$ are equivalent for the $t$-th S-PC with $t \leq d-1$, which means that $\mathcal{E}_2=\{k_1+2^{d-1},k_2+2^{d-1}\}$ will not violate the other S-PCs. 
	
	Next, if $\gamma_{\textrm{S-PC}}[d] = 1$, the two segment indices in a feasible flip coordinate should be inequivalent, while keeping the parity checks of the other S-PCs unchanged. It can be observed from \eqref{eqn:21} and \eqref{eqn:22} that $k_1$ and $k_2$ are both not involved in the $d$-th S-PC, while $k_1+2^{d-1}$ and $k_2+2^{d-1}$ are both involved, thus $\mathcal{E}_3=\{k_1+2^{d-1},k_2\}$ and $\mathcal{E}_4=\{k_1,k_2+2^{d-1}\}$ can correct the parity check of the $d$-th S-PC. Besides, according to Theorem~\ref{theorem:3}, segments $k$ and $k+2^{d-1}$ are equivalent for the $t$-th S-PC with $t \leq d-1$, which means that both $\mathcal{E}_3=\{k_1+2^{d-1},k_2\}$ and $\mathcal{E}_4=\{k_1,k_2+2^{d-1}\}$ will not violate the other S-PCs.

	Finally, since the additional descendant node imposes no S-PC on the codeword of the SR1/SPC node for the case of $\gamma_{\textrm{S-PC}}[d] = -1$, a feasible flip coordinate associated with $\gamma_{\textrm{S-PC}}[1:d]$ only needs to keep all S-PC checks unchanged. As proved above, segments $k$ and $k+2^{d-1}$ are equivalent for the $t$-th S-PC with $t \leq d-1$, thus applying this equivalence to either one or both two of the elements in $\mathcal{E}=\{k_1,k_2\}$ results in four new feasible flip coordinates as shown in \eqref{eqn:26}, which completes the proof.
\end{proof}

\begin{figure*}[!t] 
	\setlength{\abovecaptionskip}{-0.1cm} 
	\centering
	\includegraphics[width=1\textwidth]{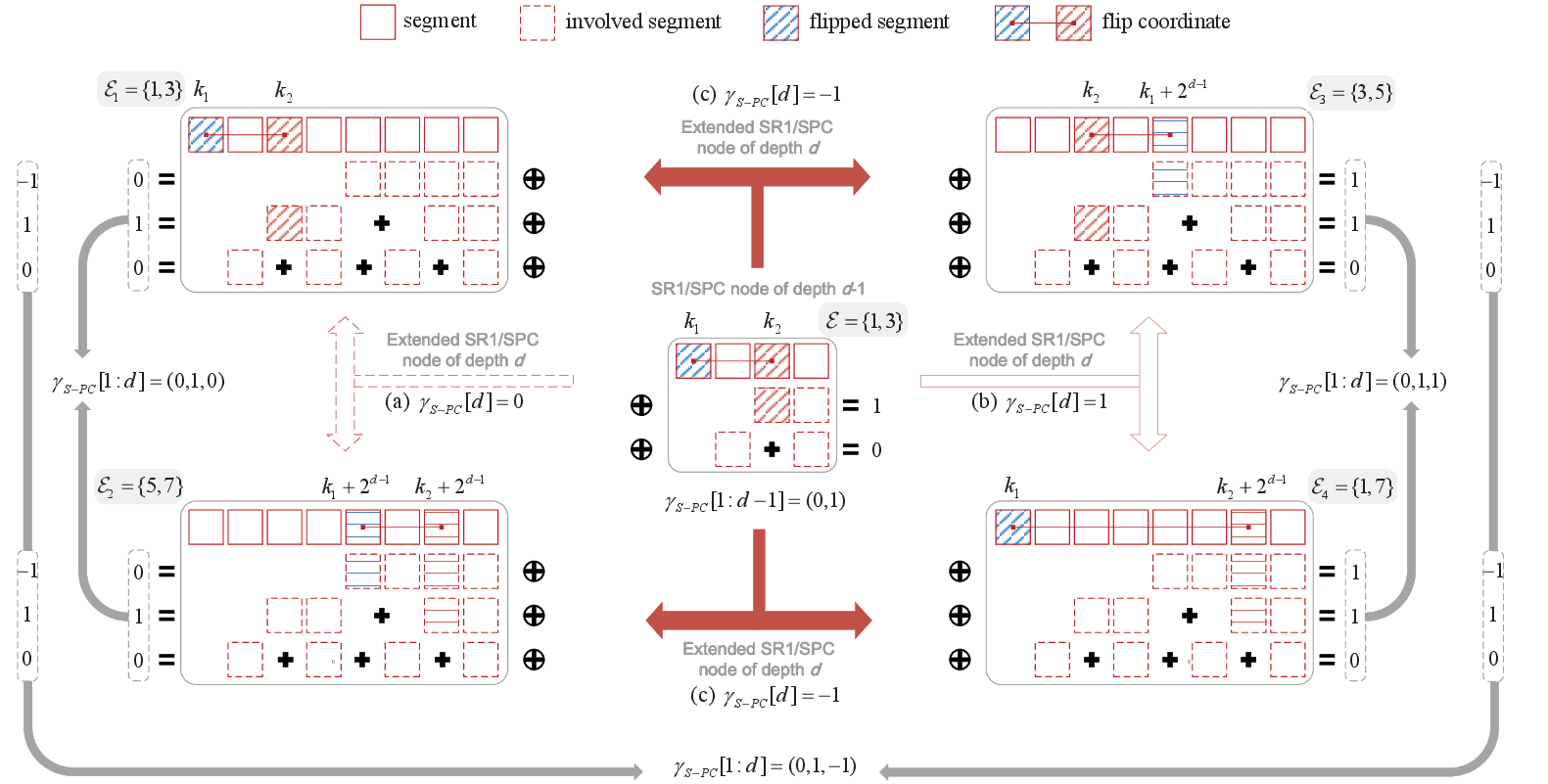}
	\caption{Splitting procedure for constructing the feasible flip coordinates when (a) $\gamma_{\textrm{S-PC}}[d]=0$, (b) $\gamma_{\textrm{S-PC}}[d]=1$ and (c) $\gamma_{\textrm{S-PC}}[d]=-1$.} \label{fig:4}
	\vspace{-0.5cm}
\end{figure*}

The results in Lemma~\ref{lemma:2} indicates a splitting procedure to construct feasible flip coordinates, as shown in Fig.~\ref{fig:4} for an example with $d=3$ and $\gamma_{\textrm{S-PC}}[1:d-1]=(0,1)$. In Fig.~\ref{fig:4}, each box represents a segment, boxes with dashed lines represent the segments involved in a particular S-PC. Besides, a box marked with colour represents a flipped segment, and each flip coordinate is represented by two connected boxes marked with different colours. Note that one can determine whether the parity check of a certain S-PC is correct through the number of involved flipped segments, i.e., flipping an even numbers of segments involved by an S-PC can maintain the corresponding parity check, otherwise the parity check is changed. Given $d=3$ and $\gamma_{\textrm{S-PC}}[1:d-1]=(0,1)$, a feasible flip coordinate can be determined as $\mathcal{E}=\{1,3\}$ according to Lemma~\ref{lemma:1}. It is observed that with $\mathcal{E}=\{1,3\}$, the parity checks of the S-PCs in the depth-$(d-1)$ SR1/SPC node are corrected. Then, $\mathcal{E}$ is split into new feasible flip coordinates, which are determined based on the value of $\gamma_{\textrm{S-PC}}[d]$, as shown in Fig.~\ref{fig:4}~(a), (b) and (c), respectively. If $\gamma_{\textrm{S-PC}}[d]=0$ (Fig.~\ref{fig:4}~(a)), the splitting procedure follows \eqref{eqn:24} in Lemma~\ref{lemma:2}, which leads to $\mathcal{E}_1=\{1,3\}$ and $\mathcal{E}_2=\{5,7\}$. If $\gamma_{\textrm{S-PC}}[d]=1$ (Fig.~\ref{fig:4}~(b)), $\mathcal{E}=\{1,3\}$ is split into $\mathcal{E}_3=\{3,5\}$ and $\mathcal{E}_4=\{1,7\}$ according to \eqref{eqn:25}. Otherwise, for the case of $\gamma_{\textrm{S-PC}}[d]=-1$ (Fig.~\ref{fig:4}~(c)), four flip coordinates $\mathcal{E}_1=\{1,3\}$, $\mathcal{E}_2=\{5,7\}$ $\mathcal{E}_3=\{3,5\}$ and $\mathcal{E}_4=\{1,7\}$ are generated according to \eqref{eqn:26}. For the extended SR1/SPC node of depth $d$, it can also be observed that all the S-PCs are not violated, which validates the correctness of flip coordinates derived from the splitting procedure.

\begin{algorithm}[!t]
	\label{alg2}
	\caption{Generate Flipping Set $\mathcal{F}_{\gamma_{\textrm{S-PC}}}$ for SR1/SPC Nodes}
	\LinesNumbered
	\KwIn{Parities for checking S-PCs $\gamma_{\textrm{S-PC}}[1:p-q]$}
	\KwOut{Flipping set $\mathcal{F}_{\gamma_{\textrm{S-PC}}}$}
	
	$\mathcal{F}_{\gamma_{\textrm{S-PC}}} = \emptyset$\;
	
	\If{$\gamma_{\textrm{S-PC}}[1:d] = (0,\ldots,0,1)$}{
		\For{$k = \{1,2,\ldots,2^{d-1}\}$}{
			$\mathcal{E} = \{k,k+2^{d-1}\}$\;
			$\mathcal{F}_{\gamma_{\textrm{S-PC}}} = \{\mathcal{F}_{\gamma_{\textrm{S-PC}}},\mathcal{E}\}$\;
		}
		\Return $\mathcal{F}_{\gamma_{\textrm{S-PC}}}$
	}
	\Else{
		Obtain $\mathcal{F}$ associated with $\gamma_{\textrm{S-PC}}[1:d-1]$ using Algorithm~\ref{alg2}\;
		
		\If{$\gamma_{\textrm{S-PC}}[d] = 0$}{
			\ForEach{$\mathcal{E}$ in $\mathcal{F}$}{
				$\mathcal{E}_1 = \{k_1,k_2\}$, \quad
				$\mathcal{E}_2 = \{k_1+2^{d-1},k_2+2^{d-1}\}$\;
				$\mathcal{F}_{\gamma_{\textrm{S-PC}}} = \{\mathcal{F}_{\gamma_{\textrm{S-PC}}},\mathcal{E}_1, \mathcal{E}_2\}$\;
			}
			\Return $\mathcal{F}_{\gamma_{\textrm{S-PC}}}$
		}
		\ElseIf{$\gamma_{\textrm{S-PC}}[d] = 1$}{
			\ForEach{$\mathcal{E}$ in $\mathcal{F}$}{
				$\mathcal{E}_3 = \{k_1+2^{d-1},k_2\}$, \quad
				$\mathcal{E}_4 = \{k_1,k_2+2^{d-1}\}$\;
				$\mathcal{F}_{\gamma_{\textrm{S-PC}}} = \{\mathcal{F}_{\gamma_{\textrm{S-PC}}},\mathcal{E}_3, \mathcal{E}_4\}$\;
			}
			\Return $\mathcal{F}_{\gamma_{\textrm{S-PC}}}$
		}
		\Else{
			\ForEach{$\mathcal{E}$ in $\mathcal{F}$}{
				$\mathcal{E}_1=\{k_1,k_2\}$, \qquad $\mathcal{E}_2=\{k_1+2^{d-1},k_2+2^{d-1}\}$\;
				$\mathcal{E}_3=\{k_1+2^{d-1},k_2\}$, \qquad $\mathcal{E}_4=\{k_1,k_2+2^{d-1}\}$\;
				$\mathcal{F}_{\gamma_{\textrm{S-PC}}} = \{\mathcal{F}_{\gamma_{\textrm{S-PC}}},\mathcal{E}_1, \mathcal{E}_2, \mathcal{E}_3, \mathcal{E}_4\}$\;
			}
			\Return $\mathcal{F}_{\gamma_{\textrm{S-PC}}}$
		}
		
	}
\end{algorithm}

Based on Lemmas~\ref{lemma:1} and \ref{lemma:2}, we present in Algorithm~\ref{alg2} the procedure to construct the flipping set $\mathcal{F}_{\gamma_{\textrm{S-PC}}}$ that contains all feasible flip coordinates for SR1/SPC nodes. From Algorithm~\ref{alg2}, it can be observed that there is a boundary condition under which  $\mathcal{F}_{\gamma_{\textrm{S-PC}}}$ can be obtained directly, which is shown in line 2. Lines 3-6 determine $\mathcal{F}_{\gamma_{\textrm{S-PC}}}$ according to Lemma~\ref{lemma:1}. In lines 8-24, a splitting procedure is performed based on Lemma~\ref{lemma:2}, where $\mathcal{F}_{\gamma_{\textrm{S-PC}}}$ is constructed based on $\mathcal{F}$ associated with $\gamma_{\textrm{S-PC}}[1:d-1]$.

Algorithm~\ref{alg2} provides a number of possible flip coordinates for correcting S-PCs, and performing bit-flipping on these bit positions will not violate the P-PCs. Considering the ML rule in \eqref{eqn:6}, we introduce a penalty metric $\lambda_{\mathcal{E}}$ associated with the flip coordinate $\mathcal{E}=\{k_1,k_2,m\}$ ($m \in \{1,2,\ldots,2^q\}$) to derive the optimal codeword, which is given by
\begin{equation}
	\begin{aligned}
		\lambda_{\mathcal{E}} = (1-2 \beta^{\textrm{S-PC}}_{k_1}[m]) \alpha^{\textrm{S-PC}}_{k_1}[m] + (1-2 \beta^{\textrm{S-PC}}_{k_2}[m]) \alpha^{\textrm{S-PC}}_{k_2}[m].
	\end{aligned}
	\label{eqn:27}
\end{equation}
Then, we select the optimal flip coordinate that has the least penalty metric for bit-flipping, i.e.,
\begin{equation}
	\begin{aligned}
		\mathcal{E}_{\textrm{opt}} = \mathop{\arg\min}\limits_{\mathcal{E} \in \mathcal{F}_{\gamma_{\textrm{S-PC}}}} \lambda_{\mathcal{E}}.
	\end{aligned}
	\label{eqn:28}
\end{equation}
Finally, for the codeword obtained from Algorithm~\ref{alg1}, bit-flipping operations are performed on the pair of bit positions determined by $\mathcal{E}_{\textrm{opt}}$, which are shown as follows:
\begin{equation}
	\begin{aligned}
		\beta^{\textrm{S-PC}}_{k_1}[m] = \beta^{\textrm{S-PC}}_{k_1}[m] \oplus 1, \qquad \beta^{\textrm{S-PC}}_{k_2}[m] = \beta^{\textrm{S-PC}}_{k_2}[m] \oplus 1.
	\end{aligned}
	\label{eqn:29}
\end{equation}

\vspace{-0.5em}
\subsection{Decoding of SR1/SPC nodes}
\begin{algorithm}[!t]
	\label{alg3}
	\caption{Decoding of SR1/SPC Nodes}
	\LinesNumbered
	\KwIn{LLR vector $\alpha^{i}_{p}[1:2^p]$ of the SR1/SPC node $\mathcal{N}^{i}_{p}$, node parameters $\operatorname{NS}(p,q,\mathcal{L})$}
	\KwOut{Codeword vector $\beta^{i}_{p}[1:2^p]$ of $\mathcal{N}^{i}_{p}$}
	
	Obtain $\beta^{i}_{p}[1:2^p]$ using Algorithm~\ref{alg1}\;
	
	Calculate $\gamma_{\textrm{S-PC}}[1:d]$ using \eqref{eqn:20}\;
	\If{$\exists t \in \{1,2,\ldots,d\}, \gamma_{\textrm{S-PC}}[t] = 1$}{
		Generate $\mathcal{F}_{\gamma_{\textrm{S-PC}}}$ using Algorithm~\ref{alg2}\;
		\ForEach{$\mathcal{E}$ in $\mathcal{F}_{\gamma_{\textrm{S-PC}}}$}{
			Calculate $\lambda_{\mathcal{E}}$ using \eqref{eqn:27}\;
		}
		Choose $\mathcal{E}_{\textrm{opt}}$ according to \eqref{eqn:28}\;
		Perform bit-flipping operations on $\beta^{i}_{p}[1:2^p]$ using \eqref{eqn:29}\;
	}
	\Return $\beta^{i}_{p}[1:2^p]$
\end{algorithm}

Based on Algorithms~\ref{alg1} and \ref{alg2}, the overall decoding procedure of SR1/SPC nodes is summarized in Algorithm~\ref{alg3}. As can be seen, an SR1/SPC node is first decoded using Algorithm~\ref{alg1} (line 1) to obtain the codeword $\beta^{i}_{p}[1:2^p]$, and at this point the P-PCs of the SR1/SPC node are satisfied. Then, the parity checks of the S-PCs $\gamma_{\textrm{S-PC}}[1:d]$ are calculated (line 2). In case that all the S-PCs are satisfied (depending on the channel conditions), which indicates that the current codeword $\beta^{i}_{p}[1:2^p]$ is optimal, then the whole algorithm is terminated. Otherwise, bit-flipping operations on $\beta^{i}_{p}[1:2^p]$ are required to correct the odd S-PC checks (lines 3-8), i.e., flipping a pair of bits on $\beta^{i}_{p}[1:2^p]$ associated with the flip coordinate. All the feasible flip coordinates are contained in the flipping set which can be determined using Algorithm~\ref{alg2} (line 4). Note that by taking all possible values of $\gamma_{\textrm{S-PC}}[1:d]$ into consideration, all the flipping sets associated with $\gamma_{\textrm{S-PC}}[1:d]$ can be generated offline before decoding. Thus, for a specific vector consists of the S-PC checks, the corresponding flipping set can be determined instantly during decoding. For each feasible flip coordinate in the flipping set, we evaluate its penalty metric according to \eqref{eqn:27} (lines 5-6). Then, the optimal flip coordinate is selected by comparing these penalty metrics (line 7). Finally, the pair of bits indicated by the optimal flip coordinate are flipped to obtain the final codeword vector (line 8). By identifying the proposed SR1/SPC nodes, Algorithm~\ref{alg3} can be incorporated into the existing fast decoding algorithms, which further reduces the decoding latency.

{\color{black}
To select the optimal codeword, the penalty metrics of each flip coordinate from the flipping set should be calculated and compared, resulting in high computational complexity. For a flip coordinate $\mathcal{E}=\{k_1,k_2,m\}$, we say $\mathcal{E}$ is corrupted if $\beta^{\textrm{S-PC}}_{k_1}[m] \not= \operatorname{HD}(\alpha^{\textrm{S-PC}}_{k_1}[m])$ or $\beta^{\textrm{S-PC}}_{k_2}[m] \not= \operatorname{HD}(\alpha^{\textrm{S-PC}}_{k_2}[m])$, which means that only one bit is flipped in the first stage. Otherwise, if $\beta^{\textrm{S-PC}}_{k_1}[m] = \operatorname{HD}(\alpha^{\textrm{S-PC}}_{k_1}[m])$ and $\beta^{\textrm{S-PC}}_{k_2}[m] = \operatorname{HD}(\alpha^{\textrm{S-PC}}_{k_2}[m])$, we say $\mathcal{E}$ is innocent. Given a corrupted flip coordinate $\mathcal{E}_1$ and an innocent one $\mathcal{E}_2$, their penalty metrics can be respectively expressed as $\lambda_{\mathcal{E}_1}=|\alpha_1|-|\alpha_2|$ and $\lambda_{\mathcal{E}_2}=|\alpha_3|+|\alpha_4|$, where $\alpha_i$, $i\in\{1,2,3,4\}$ denotes the LLR value of root node which follows Gaussian distribution. Intuitively, we can see that there is a great probability that $\lambda_{\mathcal{E}_1}<\lambda_{\mathcal{E}_2}$ is satisfied, which indicates that the innocent flip coordinates are less likely to be selected as compared to the corrupted ones. Based on this observation, we present a simplified decoding algorithm to avoid the exhaustive searching of the whole flipping set in line 5 of Algorithm~\ref{alg3}. In particular, at most $2^q$ bits are flipped in the first stage, and the positions of these bits are determined by \eqref{eqn:17}. Then, given a specific $m$, the flip coordinates which have the same segment index as $\rho[m]$ can be determined as potential corrupted ones, where $m=\{1,2,\ldots,2^q\}$. As such, we only need to calculate and compare the penalty metrics of all the potential corrupted flip coordinates. Compared with the plain decoding in Algorithm~\ref{alg3}, this simplified decoding algorithm is more friendly for hardware implementation since some unnecessary computations are released (see Table~\ref{tab4}), and it is important to mention that this simplification technique will not degrade the decoding performance (see Fig.~\ref{fig:6}).
}

\vspace{-0.5em}
\section{Simulation Results}
\label{section:5}
In this section, the node distribution, average decoding latency, computational complexity, and error-correction performance of the proposed decoding algorithms are investigated, and comparison with the state-of-the-art fast SC decoding algorithms is provided. We consider polar codes with length $N \in \{512,1024,4096\}$. When $N \leq 1024$, polar codes are constructed according to the 5G standard \cite{5Gstandard}, while for $N = 4096$, the Gaussian Approximation (GA) method \cite{Trifonov2012Efficient} is utilized for code construction. 

\vspace{-0.5em}
{\color{black}
\subsection{Node Distribution}

\begin{table*}[tb] \scriptsize\color{black}
	\caption{Distributions of EG-PC and SR1/SPC Nodes}
	\centering
	\begin{threeparttable}
		\setlength{\tabcolsep}{1.5mm}{
		\begin{tabular}{cc|cccccc|cccccc|cccccc|c|cccccc|c}\hline
			\multirow{3}{*}{$N$} & \multirow{3}{*}{$R$} &
			\multicolumn{6}{c|}{{SR1}} & \multicolumn{6}{c|}{SSPC} & \multicolumn{6}{c|}{SR1/SPC (other cases)} & \multirow{3}{*}{Total} & \multicolumn{6}{c|}{EG-PC} & \multirow{3}{*}{Total} \\ \cline{3-20} \cline{22-27} & & \multicolumn{6}{c|}{Node Depth $d$} & \multicolumn{6}{c|}{Node Depth $d$} & \multicolumn{6}{c|}{Node Depth $d$} & & \multicolumn{6}{c|}{Node Depth $d$} & \\ \cline{3-20} \cline{22-27} 
			& & 1 & 2 & 3 & 4 & 5 & $\geq$ 6 & 1 & 2 & 3 & 4 & 5 & $\geq$ 6 & 1 & 2 & 3 & 4 & 5 & $\geq$ 6 & & 1 & 2 & 3 & 4 & 5 & $\geq$ 6 & \\ \hline			
			& 1/6 & 0 & 1 & 0 & 1 & 0 & 0  & 0 & 1 & 0 & 0 & 0 & 0  & 0 & 0 & 0 & 0 & 0 & 0  & 3  & 0 & 1 & 0 & 1 & 0 & 0 & 2 \\
			& 1/3 & 0 & 1 & 1 & 0 & 0 & 0  & 0 & 1 & 1 & 0 & 0 & 0  & 0 & 0 & 1 & 1 & 0 & 0  & 6  & 0 & 1 & 1 & 0 & 0 & 0 & 2 \\
			512\tnote{1} & 1/2 & 0 & 1 & 1 & 0 & 0 & 0  & 1 & 1 & 0 & 0 & 0 & 0  & 0 & 1 & 1 & 1 & 0 & 0  & 7  & 0 & 1 & 1 & 0 & 0 & 0 & 2 \\
			& 2/3 & 0 & 0 & 1 & 0 & 1 & 0  & 0 & 2 & 1 & 0 & 0 & 0  & 0 & 0 & 0 & 2 & 0 & 0  & 7  & 0 & 0 & 1 & 0 & 1 & 0 & 2 \\
			& 5/6 & 0 & 0 & 0 & 1 & 0 & 0  & 0 & 1 & 1 & 0 & 0 & 1  & 0 & 1 & 0 & 0 & 0 & 0  & 5  & 0 & 0 & 0 & 1 & 0 & 0 & 1 \\ \hline			
			& 1/6 & 0 & 2 & 1 & 0 & 0 & 1  & 1 & 3 & 4 & 2 & 0 & 0  & 0 & 0 & 0 & 1 & 0 & 0  & 15  & 0 & 2 & 1 & 0 & 0 & 1 & 4 \\
			& 1/3 & 0 & 2 & 2 & 1 & 0 & 0  & 1 & 8 & 6 & 3 & 1 & 0  & 0 & 0 & 2 & 0 & 1 & 1  & 28  & 0 & 2 & 2 & 1 & 0 & 0 & 5 \\
			4096\tnote{2} & 1/2 & 0 & 1 & 3 & 1 & 1 & 0  & 0 & 7 & 8 & 2 & 0 & 2  & 0 & 1 & 2 & 1 & 1 & 0  & 30  & 0 & 1 & 3 & 1 & 1 & 0 & 5 \\
			& 2/3 & 0 & 5 & 4 & 0 & 0 & 1  & 4 & 7 & 4 & 4 & 0 & 0  & 0 & 0 & 1 & 1 & 2 & 1  & 34  & 0 & 5 & 4 & 0 & 0 & 1 & 9 \\
			& 5/6 & 0 & 4 & 2 & 2 & 0 & 1  & 1 & 6 & 1 & 0 & 0 & 0  & 0 & 1 & 1 & 1 & 2 & 2  & 24  & 0 & 4 & 2 & 2 & 0 & 1 & 8 \\ \hline
		\end{tabular}}
		\begin{tablenotes}  
			\footnotesize
			\item[1] Constructed according to the 5G standard \cite{5Gstandard}.    
			\item[2] Constructed by the GA method \cite{Trifonov2012Efficient}.    
		\end{tablenotes} 
	\end{threeparttable}
	\label{tab2}
	\vspace{-0.5cm}
\end{table*}

Table~\ref{tab2} shows the numbers of EG-PC and SR1/SPC nodes contained in a particular polar code with different code lengths $N$ and code rates $R$. It can be seen that generally, the number of SR1/SPC nodes increases with the increasing of the code length $N$, i.e., more latency savings can be achieved for longer polar codes. When $R>1/2$, it is observed that the total number of SR1/SPC nodes is smaller, but the node depth is larger as the node length $N$ increases. This is because as the code rate increases, the proposed SR1/SPC nodes tend to be located at higher levels of the decoding tree, and thus the number of such nodes is smaller. Besides, although the proposed SR1 node a generalized version of the EG-PC node, they have almost the same distribution in a practical polar code, which means that the source node in the SR1 node is usually Rate-0 or REP. However, other types of SR1/SPC nodes are shown to be more frequently distributed as compared to the EG-PC nodes, leading to significant overall latency reduction. We also note that the implementation complexity can be significantly reduced by using the proposed decoder, since we do not need to develop a dedicated decoder for each special node type shown in Table~\ref{tab1}. In particular, 5G polar codes with $N = 128$ can be represented by only SR0/REP and SR1/SPC nodes.
}
\vspace{-0.5em}
\subsection{Decoding Latency and Computational Complexity Analysis}
In this subsection, we measure the decoding latency of Algorithm~\ref{alg3} by the number of required time steps, where the same unlimited-resource assumptions as in \cite{Alamdar2011Simplified,Hanif2017Fast,Zheng2021Threshold} are utilized, which are shown as follows:
\begin{itemize}
	\item Addition/subtraction of real numbers can be performed in one time step.
	\item Bit operations can be carried out instantly.
	\item Wagner decoding consumes one time step.
\end{itemize}

Let $T_1$ and $T_2$ denote respectively the corresponding required time steps of the first stage and second stage in Algorithm~\ref{alg3}. First, we analyse $T_1$, i.e., the number of required time steps to perform Algorithm~\ref{alg1}. As can be seen, the decoding in the first stage depends on the source node structure. In case that the source node is Rate-0, its codeword can be obtained immediately. Otherwise, calculating the LLRs and decoding the source node require one and $T_{\textrm{SN}}$ time steps, respectively, where the value of $T_{\textrm{SN}}$ is determined by the decoding algorithm for the source node. Moreover, applying Wagner decoding to the root node requires one time step. Therefore, $T_1$ can be expressed as
\begin{equation}
	\begin{aligned}
		T_1 = \begin{cases}
			1, & \textrm{if the source node is Rate-0} \\
			T_{\textrm{SN}} + 2, & \textrm{otherwise}
		\end{cases}.
	\end{aligned}
	\label{eqn:30}
\end{equation}
Subsequently, we can see that calculating the parity checks of the S-PCs in line 2 of Algorithm~\ref{alg3} can be performed instantaneously since it only involves XOR bit operations. Note that the flipping set can be generated offline, thus conducting line 4 incurs no latency. The calculation of the penalty metrics in lines 5-6 only involves addition operations and can be executed in parallel, which requires one time step. The following selection of the optimal flip coordinate in line 7 is based on a comparison tree, and thus consuming another one time step. Finally, performing bit-flipping operations (line 8) requires no time step, since only XOR bit operations are involved as shown in \eqref{eqn:29}. As a result, $T_2$ is given by 
\begin{equation}
	\begin{aligned}
		&T_2
		=\begin{cases}
			2, & \textrm{if}\ \exists t \in \{1,2,\ldots,d\}, \gamma_{\textrm{S-PC}}[t] = 1 \\
			0, & \textrm{otherwise}
		\end{cases}.
	\end{aligned}
	\label{eqn:31}
\end{equation}
As can be seen, $T_2$ is a variable depending on $\gamma_{\textrm{S-PC}}[1:d]$, which is further dependent on the channel conditions. To summarize, the total numbers of time steps required to decode SR1, SSPC and SR1/SPC node are respectively given by
\begin{equation}
	\begin{aligned}
		T_{\textrm{SR1}} = T_1, \qquad T_{\textrm{SSPC}} = T_{\textrm{SR1/SPC}} = T_1 + T_2.
	\end{aligned}
	\label{eqn:32}
\end{equation}
Furthermore, the lower and upper bounds of $T_{\textrm{SR1/SPC}}$, denoted by $T_{\textrm{LB}}$ and $T_{\textrm{UB}}$, respectively, are given by
\begin{equation}
	\begin{gathered}
		T_{\textrm{LB}} = T_1 + \operatorname{min}(T_2) = T_1, \qquad
		T_{\textrm{UB}} = T_1 + \operatorname{max}(T_2) = T_1 + 2.
	\end{gathered}
	\label{eqn:33}
\end{equation}
Consequently, for a given polar code, the overall, minimum and maximum numbers of decoding time steps can be calculated based on \eqref{eqn:32} and \eqref{eqn:33}.

\begin{table*}[tb] \scriptsize\color{black}
	\caption{Number of Required Time steps of different decoding algorithms}
	\centering
	\begin{threeparttable}
		\begin{tabular}{cccccccccccccc}\hline
			\multirow{3}{*}{$N$} & \multirow{3}{*}{$R$} & \multirow{3}{*}{FSSC \cite{Sarkis2014Fast}} & \multirow{3}{*}{HFSC1\tnote{1}} & \multirow{3}{*}{HFSC2\tnote{2}} &
			\multirow{3}{*}{HFSC3\tnote{3}} & \multicolumn{7}{c}{SN-FSC\tnote{4}} & \multirow{3}{*}{SN-RFSC\tnote{5}}\\ \cline{7-13}
			& & & & & & \multicolumn{5}{c}{$E_b/N_0$ (dB)} & \multirow{2}{*}{LB} & \multirow{2}{*}{UB} & \\
			& & & & & & 0.0 & 1.0 & 2.0 & 3.0 & 4.0 & & & \\ \hline
			& 1/6 & 89 & 63 & 55 & 36 & 34.86 & 34.29 & 34.06 & 34.01 & 34.00 & 34 & 36 & 34 \\
			& 1/3 & 128 & 85 & 86 & 63 & 47.94 & 44.87 & 43.39 & 43.05 & 43.00 & 43 & 51 & 43 \\
			512 & 1/2 & 126 & 89 & 101 & 70 & 59.27 & 57.32 & 54.63 & 54.09 & 54.01 & 54 & 64 & 54 \\
			& 2/3 & 129 & 87 & 103 & 74 & 58.12 & 56.87 & 53.54 & 50.50 & 50.03 & 50 & 60 & 50 \\
			& 5/6 & 88 & 64 & 72 & 58 & 38.40 & 37.88 & 37.84 & 35.14 & 32.61 & 32 & 40 & 32 \\ \hline
			
			& 1/6 & 446 & 347 & 339 & 234 & 197.75 & 185.37 & 184.03 & 184.00 & 184.00 & 184 & 206 & 184 \\
			& 1/3 & 639 & 485 & 525 & 368 & 289.28 & 262.85 & 255.07 & 255.00 & 255.00 & 255 & 301 & 255 \\
			4096 & 1/2 & 674 & 508 & 547 & 402 & 309.26 & 301.29 & 271.53 & 271.01 & 308.00 & 271 & 319 & 271 \\
			& 2/3 & 590 & 434 & 474 & 354 & 273.04 & 273.24 & 248.48 & 237.15 & 237.01 & 237 & 285 & 237 \\
			& 5/6 & 409 & 282 & 321 & 244 & 185.28 & 184.96 & 184.68 & 173.33 & 163.23 & 163 & 193 & 163 \\ \hline
		\end{tabular}
		\begin{tablenotes}  
			\footnotesize
			\item[1] FSSC + Type I-V \cite{Hanif2017Fast}.
			\item[2] FSSC + G-REP \cite{Condo2018Generalized} + G-PC \cite{Condo2018Generalized}.  
			\item[3] FSSC + EG-PC \cite{Zheng2021Threshold} + SR0/REP \cite{Zheng2021Threshold}.
			\item[4] FSSC + SR0/REP \cite{Zheng2021Threshold} + SR1/SPC.
			\item[5] FSSC + SR0/REP \cite{Zheng2021Threshold} + RG-PC \cite{Condo2018Generalized}.
		\end{tablenotes} 
	\end{threeparttable}
	\label{tab3}
	\vspace{-0.5cm}
\end{table*}

In Table~\ref{tab3}, we compare the time steps required by the state-of-the-art fast SC decoding algorithms in \cite{Sarkis2014Fast,Hanif2017Fast,Condo2018Generalized,Zheng2021Threshold} and the proposed fast SC decoding algorithm, where two code lengths $N \in \{512, 4096\}$ and five code rates $R \in \{1/6, 1/3, 1/2, 2/3, 5/6\}$ are considered. The FSSC decoding algorithm proposed in \cite{Sarkis2014Fast}, which takes Rate-0, Rate-1, REP and SPC nodes into consideration, is selected as the baseline. Furthermore, the decoding of Type I-V nodes \cite{Hanif2017Fast}, G-REP and G-PC nodes \cite{Condo2018Generalized}, and SR0/REP nodes \cite{Zheng2021Threshold} are progressively applied to the baseline algorithm, leading to three hybrid fast SC decoders, namely HFSC1, HFSC2 and HFSC3, respectively. To evaluate the effects of the proposed SR1/SPC nodes on the decoding speed, the decoding of Rate-0, Rate-1, REP, SPC and SR0/REP nodes is combined with the decoding of the proposed SR1/SPC node in Algorithm~\ref{alg3}, and the resulting fast SC decoder with sequence nodes, abbreviated as SN-FSC, is used for comparison. The decoding latency of the proposed SN-FSC decoder is measured by the average number of time steps at $E_b/N_0 = \{0.0,1.0,2.0,3.0,4.0\}$ dB, since the required time steps of decoding SR1/SPC nodes $T_{\textrm{SR1/SPC}}$ is an variable affected by the channel condition, while those of the baseline algorithms are fixed. Besides, the minimum and maximum numbers of required time steps are also presented, which are denoted by `LB' and `UB', respectively. {\color{black}In order to gain further speedup, we also present a relaxed fast SC decoder with sequence nodes, namely the SN-RFSC decoder, where the SR1/SPC nodes are decoded as RG-PC nodes by ignoring the frozen bits in the SR1/SPC descendant nodes \cite{Condo2018Generalized}.} As can be seen, the proposed decoder requires fewer time steps with respect to the other decoders for all considered $N$ and $R$. For different values of $N$, it can be observed that the gain brought by decoding the proposed SR1/SPC nodes increases with $R$. In particular, when $N=512,R=5/6$ and $E_b/N_0=4.0$ dB, the proposed decoder can save up to 62.9\% and 43.8\% time steps as compared to the FSSC decoder and the HFSC2 decoder. However, the speedup gain is limited when the code length is short and the code rate is low, which is mainly because the numbers of SR1/SPC nodes in these cases are small. Besides, the gain brought by SR1/SPC nodes is larger when $E_b/N_0$ is higher, and when $E_b/N_0=4.0$ dB the decoding latency $T_{\textrm{SR1/SPC}}$ tends to approach the lower bound $T_{\textrm{SR1}}$. This implies that the S-PCs of the SR1/SPC nodes are automatically satisfied under good channel conditions. {\color{black}It can also be observed that the SN-RFSC decoder has the minimum number of required time steps, i.e., it can achieve the decoding latency lower bound of the SN-FSC decoder. However, as can be seen from Fig.~\ref{fig:5} and Fig.~\ref{fig:6}, employing this relaxed decoder also incurs severe performance loss as compared to the conventional SC decoder, thus it may not be suitable for practical implementation. }

\begin{table*}[tb] \scriptsize\color{black}
	\caption{Computational Complexity Comparison of Different Decoders for $\mathcal{P}(1024,512)$}
	\centering
	\begin{threeparttable}
		\begin{tabular}{ccccccccccccc}\hline
			\multirow{3}{*}{ACS Operations} & \multirow{3}{*}{SC} & \multirow{3}{*}{FSSC} & \multirow{3}{*}{HFSC1} & \multirow{3}{*}{HFSC2} & \multirow{3}{*}{HFSC3} & \multicolumn{7}{c}{SN-FSC} \\ \cline{7-13}
			& & & & & & \multicolumn{5}{c}{$E_b/N_0$ (dB)} & \multirow{2}{*}{LB} & \multirow{2}{*}{UB} \\
			& & & & & & 0.0 & 1.0 & 2.0 & 3.0 & 4.0 & & \\ \hline			
			\multirow{2}{*}{Add} & \multirow{2}{*}{5120} & \multirow{2}{*}{3110} & \multirow{2}{*}{3056} & \multirow{2}{*}{3110} & \multirow{2}{*}{5379} & 5244.36 & 5144.23 & 4958.23 & 4930.72 & 4928.19 & 4928 & 5440 \\
			& & & & & & 4999.12 & 4976.62 & 4936.26 & 4928.95 & 4928.09 & 4928 & 5056 \\ \hline
			\multirow{2}{*}{Compare} & \multirow{2}{*}{5120} & \multirow{2}{*}{2742} & \multirow{2}{*}{2612} & \multirow{2}{*}{2604} & \multirow{2}{*}{2220} & 2216 & 2216 & 2216 & 2216 & 2216 & 2216 & 2216 \\
			& & & & & & 2216 & 2216 & 2216 & 2216 & 2216 & 2216 & 2216 \\ \hline
			\multirow{2}{*}{Sort\tnote{1}} & \multirow{2}{*}{0} & \multirow{2}{*}{364} & \multirow{2}{*}{436} & \multirow{2}{*}{384} & \multirow{2}{*}{480} & 970.37 & 870.23 & 684.23 & 656.72 & 654.19 & 654 & 1166 \\
			& & & & & & 725.12 & 702.62 & 662.26 & 654.95 & 654.09 & 654 & 782 \\ \hline
			\multirow{2}{*}{Total} & \multirow{2}{*}{10240} & \multirow{2}{*}{6216} & \multirow{2}{*}{6104} & \multirow{2}{*}{6098} & \multirow{2}{*}{8079} & 8430.71 & 8230.45 & 7858.46 & 7803.44 & 7798.37 & 7798 & 8822 \\
			& & & & & & 7940.24 & 7895.24 & 7814.52 & 7799.91 & 7798.19 & 7798 & 8054 \\ \hline

		\end{tabular}
		\begin{tablenotes}  
		\footnotesize
		\item[1] This metric is obtained by counting the total number of data that is required to be sorted.
		\end{tablenotes} 
	\end{threeparttable}
	\label{tab4}
	\vspace{-0.5cm}
\end{table*}

{\color{black}The above latency analysis of Algorithm~\ref{alg3} is based on the unlimited-resource assumption where operations can be parallelly executed in a single time step. In the other extreme, i.e., the case of a fully serial implementation, the number of arithmetic (including add, compare and sort (ACS)) operations can also be employed to measure the average computational complexity. Table~\ref{tab4} compares the number of ACS operations required by the considered decoders, for $\mathcal{P}(1024,512)$. For the proposed SN-FSC decoder, the two numbers on each row/column in Table~\ref{tab4} provide the numbers of ACS operations required by the plain decoder and the simplified decoder, respectively. As can be seen, the decoding of SR0/REP node \cite{Zheng2021Threshold} mainly incurs addition operations, which is caused by LLR calculation of a list of repetition sequence patterns, and it was shown in \cite{Zheng2020Implementation} that adopting the SR0/REP node increases the computational resource consumption. Similar to the SR0/REP node, the decoding of the proposed SR1/SPC nodes requires more sorters to correct the P-PCs and S-PCs, which also increases the number of overall ACS operations. However, we can observe from Table~\ref{tab4} that the ACS operations required by the proposed plain SN-FSC decoder is only slightly increased as compared to other fast SC decoders and still significantly lower than the original SC decoder. In addition, with the proposed simplification technique, the SN-FSC decoder can achieve lower computational complexity as compared to the HFSC3 decoder. In general, fast decoding of the proposed sequence nodes achieves significant decoding speedup at the cost of slightly increased hardware resource consumption, which is very promising.}

\vspace{-0.5em}
{\color{black}
\subsection{Error-Correction Performance Comparison}

\begin{figure}[!t]\color{black}
	\centering
	\setlength{\abovecaptionskip}{-0.1cm} 
	\includegraphics[width=0.45\textwidth]{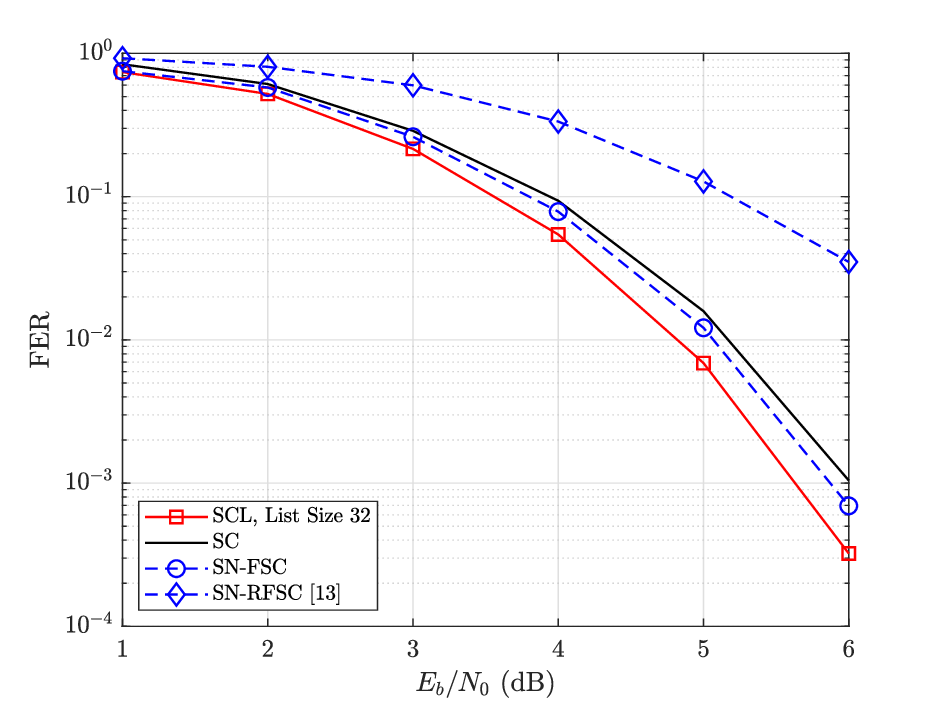}
	\caption{FER performance comparison of the SC, SCL, SN-FSC and SN-RFSC decoders, for an SSPC node $\operatorname{NS}(6,2,\{2,3,4,5\})$ with Rate-0 source node.}
	\label{fig:5}
	\vspace{-0.5cm}
\end{figure}

To avoid costly computations, an SCL decoder with list size 32 is implemented to get close-to-ML decoding performance. In Fig.~\ref{fig:5}, we compare the frame error rate (FER) performance of different decoders, when decoding an SSPC polar subcode $\operatorname{NS}(6,2,\{2,3,4,5\})$ that can be find in a GA polar code with $N=4096$. It can be seen that although the performance of the proposed decoding algorithm is inferior to that achieved by the SCL decoder (with list size 32), it still outperforms the conventional SC decoding. This means the proposed decoding algorithms can achieve near-ML performance for the SR1/SPC nodes. However, ignoring the S-PCs, as done in \cite{Condo2018Generalized}, causes severe performance degradation (larger than 1 dB). As a result, it is vital to take all parity constraints into consideration in order to preserve the decoding performance.

\begin{figure}[!t]\color{black}
	\centering
	\setlength{\abovecaptionskip}{-0.1cm} 
	\includegraphics[width=0.45\textwidth]{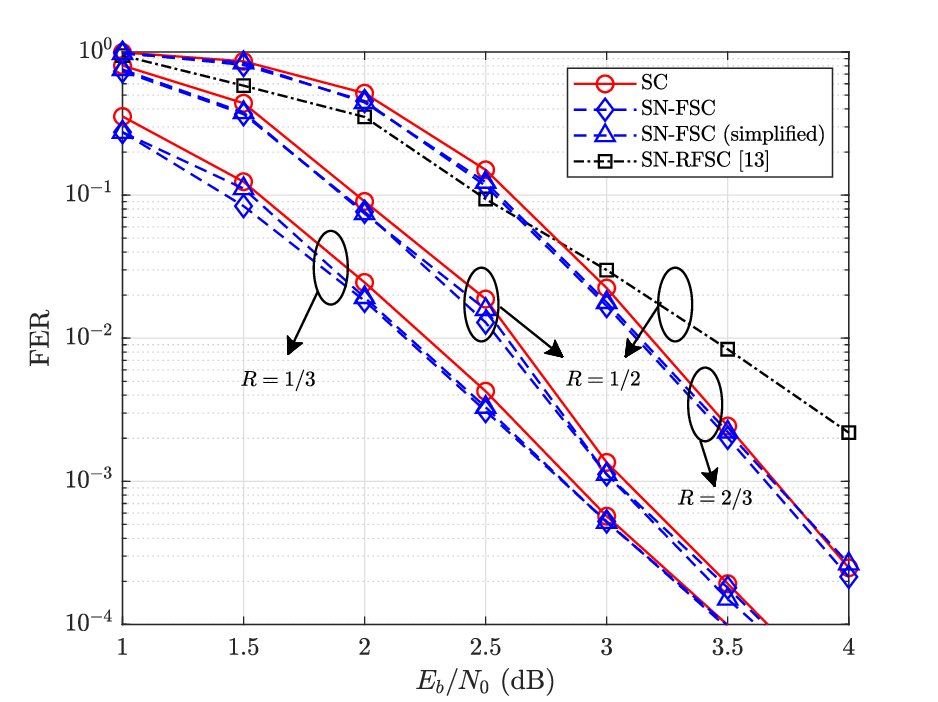}
	\caption{FER performance comparison of the SC, SN-FSC and SN-RFSC decoders, for 5G polar codes with length $N=1024$ and $R=\{1/3,1/2,2/3\}$.}
	\label{fig:6}
	\vspace{-0.5cm}
\end{figure}

A similar phenomenon can also be observed from Fig.~\ref{fig:6}, where the proposed decoder is shown to outperform the conventional SC decoder for a $N=1024$ polar code with $R \in \{1/3,1/2,2/3\}$. Note that this is an important advantage of the proposed decoder since the performance of previous fast SC decoders cannot exceed that of the conventional SC decoder. Besides, Fig.~\ref{fig:6} also shows that the employment of the proposed simplification technique incurs no performance loss, yet with reduced computational complexity.
}

\vspace{-0.5em}
\section{Conclusions}
\label{section:6}
In this work, we introduced a new class of multi-node information and frozen bit patterns, namely the SR1/SPC nodes, and proposed the corresponding fast SC decoding algorithm. The proposed SR1/SPC node can be identified by a sequence of Rate-1 or SPC nodes, and it provides a unified description of a wide variety of special nodes. Moreover, the decoding rules for the proposed SR1/SPC nodes were derived to facilitate fast SC decoding, with which a higher degree of parallelism can be obtained. The proposed decoding algorithm was compared with the state-of-the-art fast SC decoding algorithms, in terms of decoding latency and error-correction performance. Simulation results showed that by combining the existing special nodes with the proposed SR1/SPC nodes, we can achieve much lower decoding latency as well as noticeable performance improvement. Besides, we note that the proposed SR1/SPC nodes are very promising as they can also be adopted in the SCL decoder, which is left for future work.
\vspace{-0.5em}
\appendices
\section{Proof of Theorem~\ref{theorem:1}}
\label{appendex:1}
Given the codeword of each node at level $q$, the codeword of the root node at level $p$ can be derived based on the bit recursion formula in \eqref{eqn:4}, i.e., 
\begin{equation}
	\begin{aligned}
		\beta^{i}_{p}[1:2^p] = \bm{\beta}_{q} (\bm{F}^{\otimes d} \otimes \bm{I}_{2^q}),
	\end{aligned}
	\label{aeqn:1}
\end{equation}
where $\bm{\beta}_{q} = (\beta^{L_q}_{q}[1:2^q], \beta^{L_q+1}_{q}[1:2^q], \ldots, \beta^{R_q}_{q}[1:2^q])$. Then, using the identity $(\bm{A} \otimes \bm{B}) \otimes \bm{C} = \bm{A} \otimes (\bm{B} \otimes \bm{C})$ with $\bm{A} = \bm{B} = \bm{F}$ and $\bm{C} = \bm{I}_{2^q}$, \eqref{aeqn:1} can be rewritten as
\begin{equation}
	\begin{aligned}
		\beta^{i}_{p}[1:2^p] =& \bm{\beta}_{q} (\bm{F}^{\otimes d-1} \otimes (\bm{F} \otimes \bm{I}_{2^q})) 
		= \bm{\beta}_{q} \bigg(\bm{F}^{\otimes d-1} \otimes \begin{pmatrix} \bm{I}_{2^q} & \bm{0}_{2^q} \\ \bm{I}_{2^q} & \bm{I}_{2^q} \end{pmatrix}\bigg).
	\end{aligned}
	\label{aeqn:2}
\end{equation}
Repeating the above procedures $d-1$ times leads to
\begin{equation}
	\begin{aligned}
		\beta^{i}_{p}[1:2^p] = \bm{\beta}_{q} \bm{G}^\prime_{2^d},
	\end{aligned}
	\label{aeqn:3}
\end{equation}
where $\bm{G}^\prime_{2^d}$ is obtained by replacing the 0 and 1 elements in $\bm{G}_{2^d}$ by $\bm{0}_{2^q}$ and $\bm{I}_{2^q}$, respectively. Then, for any $k \in \{1,2,\ldots,2^q\}$, we have
\begin{equation}
	\begin{aligned}
		\bigoplus\limits_{j=1}^{2^d} \beta^{i}_{p}[(j-1)2^q+k] & =  \bigoplus\limits_{j=1}^{2^d}(\bm{\beta}_{q}  \bm{G}^\prime_{2^d})_{(j-1)2^q+k} = \bm{\beta}_{q} \bigoplus\limits_{j=1}^{2^d} (\bm{G}^\prime_{2^d})_{(j-1)2^q+k} \\
		&\overset{(a)}{=} \bm{\beta}_{q} ((\bm{I}_{2^q}, \overbrace{\bm{0}_{2^q}, \ldots, \bm{0}_{2^q}}^{2^d-1})^T)_k = \bm{\beta}_{q} (\overbrace{0,\ldots,0}^{k-1},1,\overbrace{0,\ldots,0}^{2^p-k})^T \\
		& = \beta_{q}[k] \overset{(b)}{=} \beta^{L_q}_{q}[k],
	\end{aligned}
	\label{aeqn:4}
\end{equation}
where $(a)$ is based on the property of the generator matrix, i.e., $\bigoplus_{j=1}^{2^d}(\bm{G}_{2^d})_j = (1,0,\ldots,0)^T$, and $(b)$ is due to the definition of $\bm{\beta}_{q}$. This thus completes the proof.
\vspace{-0.5em}
\section{Proof of Theorem~\ref{theorem:2}}
\label{appendex:2}
Similar to the proof of Theorem~\ref{theorem:1}, the following equality can be obtained:
\begin{equation}
	\begin{aligned}
		\bigoplus\limits_{j=1}^{2^{p-r-1}} \beta^{i}_{p}[(2j-1)2^r+k]  & \overset{(a)}{=} \bm{\beta}_{r} \bigoplus\limits_{j=1}^{2^{p-r-1}} (\bm{G}^\prime)_{(2j-1)2^r+k} \overset{(b)}{=} \bm{\beta}_{r} ((\bm{0}_{2^r}, \bm{I}_{2^r}, \overbrace{\bm{0}_{2^r}, \ldots, \bm{0}_{2^r}}^{2^{p-r}-2})^T)_k \\
		&= \bm{\beta}_{r} (\overbrace{0,\ldots,0}^{2^r+k-1},1,\overbrace{0,\ldots,0}^{2^p-2^r-k})^T = \beta_{q}[2^r+k] = \beta^{L_r+1}_{r}[k],
	\end{aligned}
	\label{aeqn:5}
\end{equation}
where $(a)$ is based on \eqref{aeqn:3} and $(b)$ is because $\bigoplus_{j=1}^{2^{p-r-1}}(\bm{G}_{2^{p-r}})_{2j} = (0,1,\ldots,0)^T$. Since $\mathcal{N}^{L_r+1}_{r}$ is an SPC node, its corresponding codeword should satisfy \cite{Sarkis2014Fast}
\begin{equation}
	\begin{aligned}
		\bigoplus\limits_{k=1}^{2^{r}} \beta^{L_r+1}_{r}[k] = 0. 
	\end{aligned}
	\label{aeqn:6}
\end{equation}		
By replacing $\beta^{L_r+1}_{r}[k]$ on the left-hand-side of \eqref{aeqn:6} by \eqref{aeqn:5}, \eqref{eqn:13} can be readily proved.
\vspace{-0.5em}
\section{Proof of Theorem~\ref{theorem:3}}
\label{appendex:3}
First, we prove the equivalence of segments $k$ and $k+\mu2^t$, $\mu = \{\lceil -k/2^t \rceil,\ldots,\lfloor (2^d-k)/2^t \rfloor\} \setminus \{0\}$. Starting from $\mu=1$, if segment $k$ is involved in the $t$-th S-PC, i.e., $k \in \mathcal{I}_{t}$, the range of $k+2^t$ can be accordingly determined by
\begin{equation*}
	\begin{aligned}
		k \in \mathcal{I}_{t} & \Rightarrow (2j-1)2^{t-1}+1+2^t \leq k+2^t \leq j2^t+2^t \\
		&\Rightarrow (2(j+1)-1)2^{t-1}+1 \leq k+2^t \leq (j+1)2^t \\
		&\Rightarrow k+2^t \in \mathcal{I}_{t}.
	\end{aligned}
\end{equation*}
By viewing $j+1$ as a new $j$, it can be observed from \eqref{eqn:21} that segment $k+2^t$ is also involved in the $t$-th S-PC. Likewise, if segment $k$ is not involved in the $t$-th S-PC, i.e., $k \in \mathcal{I}_{t}^c$, the range of $k+2^t$ can be determined based on \eqref{eqn:22}, given by
\begin{equation*}
	\begin{aligned}
		k \in \mathcal{I}_{t}^c & \Rightarrow (j-1)2^t+1+2^t \leq k+2^t \leq (2j-1)2^{t-1}+2^t \\
		&\Rightarrow j2^t+1 \leq k+2^t \leq (2(j+1)-1)2^{t-1} \\
		&\Rightarrow k+2^t \in \mathcal{I}_{t}^c,
	\end{aligned}
\end{equation*}
which shows that segment $k+2^t$ is not related to the $t$-th S-PC. Therefore, segments $k$ and $k+2^t$ are equivalent for $j=1$. Then, following a similar procedure, it can be proved that segment $k+2^t$ is equivalent to segment $k+2\times2^t$ and the latter is further equivalent to $k+3\times2^t$, etc. To sum up, we can prove that segment $k$ is equivalent to segments $k+\lceil -k/2^t \rceil2^t,\ldots,k-2^t,k+2^t,\ldots,k+\lfloor (2^d-k)/2^t \rfloor2^t$, which thus proves that segments $k+\mu2^t$, $\mu = \{\lceil -k/2^t \rceil,\ldots,\lfloor (2^d-k)/2^t \rfloor\}$ are equivalent with each other.

Then, we prove that segments $k$ and $k+(2\nu-1)2^{t-1}$ are inequivalent ($\nu \in \{\lceil -k/2^t+1/2 \rceil,\ldots,\lfloor (2^d-k)/2^t+1/2 \rfloor\}$). Let $\nu=1$, depending on whether segment $k$ is involved in the $t$-th S-PC or not, the ranges of $k+2^{t-1}$ can be respectively obtained as follows based on \eqref{eqn:21} and \eqref{eqn:22}:
\begin{equation*}
	\begin{aligned}
		k \in \mathcal{I}_{t} & \Rightarrow (2j-1)2^{t-1}+1+2^{t-1} \leq k+2^{t-1} \leq j2^t+2^{t-1} \\
		&\Rightarrow j2^t+1 \leq k+2^{t-1} \leq (2j+1)2^{t-1} \\
		&\Rightarrow k+2^t \in \mathcal{I}_{t}^c,
	\end{aligned}
\end{equation*}
\begin{equation*}
	\begin{aligned}
		k \in \mathcal{I}_{t}^c & \Rightarrow (j-1)2^t+1+2^{t-1} \leq k+2^{t-1} \leq (2j-1)2^{t-1}+2^{t-1} \\
		&\Rightarrow (2j-1)2^{t-1}+1 \leq k+2^{t-1} \leq 2j2^{t-1} \\
		&\Rightarrow k+2^t \in \mathcal{I}_{t}.
	\end{aligned}
\end{equation*}
As can be seen, only one of the two segments $k$ and $k+2^{t-1}$ is involved in the $t$-th S-PC. Therefore, for the $t$-th S-PC, segments $k$ and $k+2^{t-1}$ are inequivalent, which proves the case of $\nu=1$. Then, since segments $k+2^{t-1}$ and $k+2^{t-1}+\mu2^t$ are equivalent based on the above proof, we can conclude that segment $k$ is also inequivalent to segment $k+(2\nu-1)2^{t-1}$ with $\nu \in \{\lceil -k/2^t+1/2 \rceil,\ldots,\lfloor (2^d-k)/2^t+1/2 \rfloor\}$. This thus completes the proof.

{
\bibliographystyle{IEEEtran}
\bibliography{IEEEabrv,mybibfile}
}

\end{document}